\newtheorem{lemma}{Lemma}[section]
\newtheorem{corollary}{Corollary}[lemma]
\theoremstyle{definition}
\newtheorem{definition}{Definition}[section]
\newtheoremstyle{note}
{3pt}
{3pt}
{}
{}
{\textbf{\itshape}}
{:}
{.5em}
{}
\newtheorem{rem}{Remark}[section]
\newcommand*{\Comb}[2]{{}^{#1}C_{#2}}%
\definecolor{Gray}{gray}{0.9}
\definecolor{LightCyan}{rgb}{0.88,1,1}
\begin{document}

\preprint{APS/123-QED}

\title{Unified Framework for Complex Graph-Data: Introducing the Hybrid Layered Network Model}

\author{Shraban Kumar Chatterjee}
\email{chatterjee.2@iitj.ac.in}
\affiliation{IIT Jodhpur}
 \altaffiliation[]{%
     Department
    of Computer Science and Engineering \\Indian Institute of Technology, Jodhpur, 342030.
    }
\author{Suman Kundu}%
 \email{suman@iitj.ac.in}
 \affiliation{IIT Jodhpur}
%


\date{\today}

\begin{abstract}
The present paper provides a generalized model of network, namely, Hybrid Layered Network (HLN). We proved that the sets of all homogeneous, heterogeneous and multi-layered networks are subsets of the set of all HLNs depicting the model's generalizability. The proposed HLN is more efficient in encoding different types of nodes and edges {when compared to representing the same information through heterogeneous or multilayered networks}. It is found experimentally that the HLN model when used with GNNs improve tasks such as link prediction. In addition, we present a novel parameterized algorithm (with complexity analysis) for generating synthetic HLNs. The networks generated from our proposed algorithm are more consistent in modelling the layer-wise degree distribution of a real-world Twitter network (represented as HLN) than those generated by existing models. Moreover, we also show that our algorithm is capable of generating various multilayer and homogeneous network. Further, we define different structural measures for HLN {namely multilayer neighborhood, degree centrality, closeness centrality and betweeness centrality}. Accordingly, we established the equivalency of the proposed structural measures of HLNs with that of homogeneous, heterogeneous, and multi-layered networks.
\end{abstract}

\maketitle


\section{Introduction}
\label{Intro}
Graph or Network data structure has long history in analyzing different complex systems, ranging from physical processes to biological systems. Networks are the root of fields like social network analysis and network sciences. In its basic structure, a graph or network consists of homogeneous nodes (or vertex) and connections (or edges) between them. Nodes represents entities like person, place, organization, hashtag, tweet etc. and connections represents relationships like friendship, following, works in etc. In many cases, complex systems cannot be expressed with this simple form. For example, in a single network there could be different type of nodes with diverse connections. Heterogeneous network model has been used for such networks \cite{nr, Pahwa2014, doi:10.1126/sciadv.aba4508}. In the literature, multilayer networks have also been used for different network science problems \cite{Berlingerio2013, Cardillo20132, 10.1371/journal.pcbi.1001106, 5992619}. 
However, these existing models fails to capture all the properties of the network effectively if the network contains various types of nodes with multiple relations between the same pair of nodes. These type of relationships are very common with modern online applications. For example, in X (formally twitter) an user can follow another user, mention the same user in post and reply to the same user for another posts. Now with this simple and frequent situation, a user is connected with another user with following-follower as well as mention relation. Similarly, an user is connected to a post with reply and mention relations. Hence, problems deals with current online social networks require to capture several relationships and entities simultaneously not just a friendship relation (in Facebook) or following-followers (in X). This is further exaggerated with the concept of decentralized social network introduced with the emergence of Web 3.0; ActivityPub protocol \cite{ActivityPub} is one step towards the next generation online social networking. Similarly, advance analysis of physical, biological, and chemical process now includes many more parameters than earlier that require more than a heterogeneous or multi-layer networks. All these require a new model for network analysis capable of equally handling simple networks and demanding networks such as decentralize social network and multifeatured social networks.  

\paragraph{Motivational Examples} (Example 1) Let us consider Facebook \cite{LEWIS2008330} network, it contains several types of nodes viz users, posts, pictures, and groups. All these entities are involved in many different interactions, e.g., friendship interactions, users tagged in a picture (may or may not be friend), user liked a post/picture, group related activities etc. A layer in Facebook can contain interactions between users based on friendship; another layer can contain relationships between users who belong to the same picture; and a third layer can be formed with interactions between users and groups. A multi-layered network cannot support heterogeneity in a layer due to the absence of node or edge types in their definition \cite{BOCCALETTI20141, Bianconi_2017, Bianconi2018, 10.1093/comnet/cnu016}. On the other hand, a single heterogeneous network cannot retain all the information present in a multi-layered network. The existing heterogeneous networks allow only one type of link between two objects. If we use existing data structures for the Facebook network, we will lose certain information.

(Example 2) Let us now look into a network with physical process. The network of chemical, gene, pathways, and diseases (CGPD) shows multilayer and heterogeneous characteristics \cite{Yang2022} similar to our example of Facebook. In the CGPD network, multiple types of nodes exist, such as chemicals, genes, biological pathways, and diseases. These entities engage in diverse interactions. For instance a layer may represent interactions between genes based on co-expression or shared functional annotations. Another layer could capture relationships between chemicals and genes, such as chemicals regulating gene expression or inhibiting protein function. A third layer may consist of interactions between pathways and diseases, reflecting how certain pathways contribute to disease progression. Additionally, direct interactions between chemicals and diseases (e.g., a drug targeting a specific disease) or genes linked to diseases (through genetic mutations or associations) create further connectivity.

(Example 3) Consider the decentralized social network platform Fediverse\footnote{https://fediverse.to}. This platform utilize the ActivityPub protocol and allow different social networking servers to communicate with each other. Each of these servers, running on different software, are of heterogeneous by itself providing services for blog, image sharing, video streaming, microblogging etc. to its users. In order to run network analysis the data structure must capture properties of each of the servers while preserving its characteristics. That is each site are the layers with heterogeneous nodes and links in it. 

\paragraph{Existing Approaches} Many networks used in different applications \cite{Malek2020, Hammoud2020, PhysRevE.102.032303, Hristova2016} are not homogeneous in nature. Some examples are already provided above. Nevertheless, such networks are assumed to be homogeneous \cite{6413734, 7288742}, heterogeneous \cite{Jin2021, 7033447} as well as multilayered \cite{Huang2021, PhysRevE.98.012303, PhysRevE.89.062817} network while solving different problems. Although the combination of the words `multilayer' and `heterogeneous' is used in many of the social network research \cite{gyanendro-singh-etal-2020-sentiment, RYBALOVA2021110477, 9454288, 8727718, Liu2024, 9756648, 10531187, 8683574, Pio-Lopez2021}, these methods are merely a manifestation of the multilayer network data structure \cite{BOCCALETTI20141} in the context of application. In other words, the underlying data structure used therein does not support heterogeniety in independent layer.
For example, work on sentiment analysis \cite{gyanendro-singh-etal-2020-sentiment} and inter-layer coupling dynamics \cite{RYBALOVA2021110477} consider each layer as homogeneous while different layers contain different types of nodes, i.e., none of the layers are heterogeneous. Nevertheless, the literature does not propose any unified model of network that incorporates all possible characteristics of modern complex networks.

\paragraph{Our Contribution} This paper proposes a new network model, Hybrid Layered Network (HLN) that unifies heterogeneous and multi-layered networks into one framework that can express modern complex networks, by supporting heterogeneity and multi-layered properties simultaneously. Homogeneous, heterogeneous and multilayered networks are special case of the proposed HLN. Various structural measures are developed for this network. In addition, the paper proposes a novel parameterized algorithm for generating a synthetic HLN. The algorithm is capable of generating homogeneous, heterogeneous, multilayered, and heterogeneous multilayered networks by setting the parameters appropriately. The paper has four main contributions as summarized below.

\begin{itemize}
    \item Proposes an unified model of network HLN. We define various structural measures for this model.
    \item We prove that the set of all homogeneous, heterogeneous, and multilayered networks is a subset of the set of all Hybrid Layered Networks.
    \item We present an algorithm that generates a HLN with various layers and different types of nodes.
    \item Various experimental results show the applicability of the proposed model in different applications and the benefit of unifying layers and heterogeneity within the model.
\end{itemize}
    
The remaining paper is organized as follows. In Section \ref{related} we briefly discuss the preliminaries and report the related work in the field. The proposed definition of a generalized hybrid layered network is presented in Section \ref{definition} and its structural properties are presented in Section \ref{neighborhood}. Section \ref{algorithm} contains the algorithm for generating a hybrid layered network and experimental results are presented in Section \ref{results}. Finally, Section \ref{conclude} concludes the findings.

\section{Related Work}
\label{related}
Graph has been in use for solving various real-life problems since the work of Euler on Konigsberg Bridge Problem in 1736. Graph used until 2009 are homogeneous that can be mathematically defined as  

\begin{definition}[Homogeneous Networks]\label{defHomo}
A homogeneous network is a graph $G = (V, E)$ with the vertex set $V$ and the edge set $E$ denoting the relations among these vertices.
\end{definition}

Heterogeneous networks are developed in late 2000s with the work of \cite{Sun2009RankClusIC, Sun2009RankingbasedCO, 10.1007/978-3-642-04747-3_2}. The survey \cite{7536145} presents a good idea about some of the more recent works on heterogeneous networks. Mathematically, heterogeneous network is defined as,

\begin{definition}[Heterogeneous Network \cite{7536145}]\label{defHet} A network $$H=(V, E, \{A, B\}, \{f_1, f_2\})$$ with edges having multiple nodes and edge types with functions $f_1$ and $f_2$ to map nodes and edges respectively to their types $A$ and $B$ is called a heterogeneous network. 
\end{definition}

Note that it is mandatory for either the node type or the edge type to be greater than one. Two links which belong to the same relation type have the same starting node type as well as the ending node type.
Heterogeneous networks are seen to be applied in link prediction \cite{7033447}, community detection \cite{9527702}, modeling human collective behavior \cite{7586100}, rail transit network \cite{9357955} and heterogeneous susceptible infected network \cite{10.1145/3485447.3512184}.
Another framework for network is multi-layer network. One of the pioneering works on multi-layer networks was by Moreno et al. \cite{10.1093/comnet/cnu016}. They have proposed a formal definition of multi-layered networks. The definition was further simplified along with the addition of structural measures in \cite{BOCCALETTI20141} as defined next. 

\begin{definition}[Multi-layered Network \cite{BOCCALETTI20141}]\label{defMultiOld}
A multi-layered network is defined as a triple $M = ( Y, G_{intra}, G_{inter})$ where $Y = \{1, 2, \cdots, k\}$ is the set of layers, $G_{intra} = (G_1, G_2, G_3, ..., G_k)$ is a sequence of graphs with each graph $G_i = (V_i, E_i)$   belonging to a layer, and $G_{inter} = \{G_{ij} = (V_i, V_j, E_{ij}) | i \neq j\}$. An inter layer graph $G_{ij}$ for layer $i$ to $j$ contains all the nodes and edges from layer $i$ to layer $j$.
\end{definition}

The formalism of multi-layer networks have led to various studies and applications on them. Multi-layer networks have been studied in various contexts like the study of flow processes or diffusion \cite{Cencetti_2019, Chang2019}, epidemic modelling and disease spreading \cite{PhysRevE.98.012303, Echegoyen_2021}, generalization of the percolation theory \cite{PhysRevResearch.2.033122, Zhang_2019}, clique based heuristic node analysis \cite{K2021}, localization properties of the network helping to understand the propagation of perturbation \cite{PhysRevE.97.042314}, and how the failure of nodes in one layer propagates to other layers \cite{PhysRevE.100.052306}.
It is essential to mention that the literature contains a wide variety of networks very similar in definition to multi-layered networks like multiplex networks \cite{Cardillo2013, 10.1145/2808797.2808852}, multilevel networks \cite{10.1145/3308558.3313484, 10.1145/1947940.1947942} and network of networks \cite{10.1145/2623330.2623643}.

Heterogeneous and multilayered network models are used separately for different problems. Further, multilayer network data structure had been referred as `heterogeneous multilayer' \cite{gyanendro-singh-etal-2020-sentiment, RYBALOVA2021110477, 8727718, Liu2024, 9756648, 10531187, Pio-Lopez2021} or `multilayered heterogeneous' \cite{9454288, 8683574} when used in many applications. The probable reason for the same is to express that the whole network is heterogeneous. However, the data structures used therein do not support heterogeneity in each layer, rather can be trivially reduced to the same definition used in \cite{10.1093/comnet/cnu016}. {A layered network with multiple nodes and edge types are considered heterogeneous in \cite{gyanendro-singh-etal-2020-sentiment, Liu2024, Pio-Lopez2021}. However, nodes of an individual type constitute a layer designating it as similar to multilayer network. The authors of \cite{9756648} consider each type of relation to be in a separate layer. Similarly, \cite{10531187}, considers the type of node (a person wearing a mask) to be fixed in a layer. In \cite{8683574}, the word heterogeneous is used to represent different non-overlapping communities in a layer. The nodes are homogeneous and, interestingly, some communities are shared between layers. This definition of layers and heterogeneity cannot be generalized to accommodate different network data.}

The literature review shows us that despite having different work on heterogeneous and multi-layered networks, there is no unified framework of network which can address heterogeneity and multi-layered property simultaneously. Furthermore, there is no data structure of network that can be used to capture various types of network, including homogeneous, heterogeneous and multi-layer network, depending upon the problem at hand.

\section{Hybrid Layered Network (HLN)}\label{definition}
In this section we will provide detail about the proposed Hybrid Layered Network (HLN). HLN is an unified data structure that can be heterogeneous and multi-layer simultaneously and thus able to represent complex graph-data. The data structure is flexible enough to express individual types of network as well as their complex combinations. In the latter part of this section we proved the generalization capabilities of the proposed data structure. 

\begin{figure}
     \centering
     \begin{subfigure}{0.6\textwidth}
         \centering
         \includegraphics[width=\textwidth]{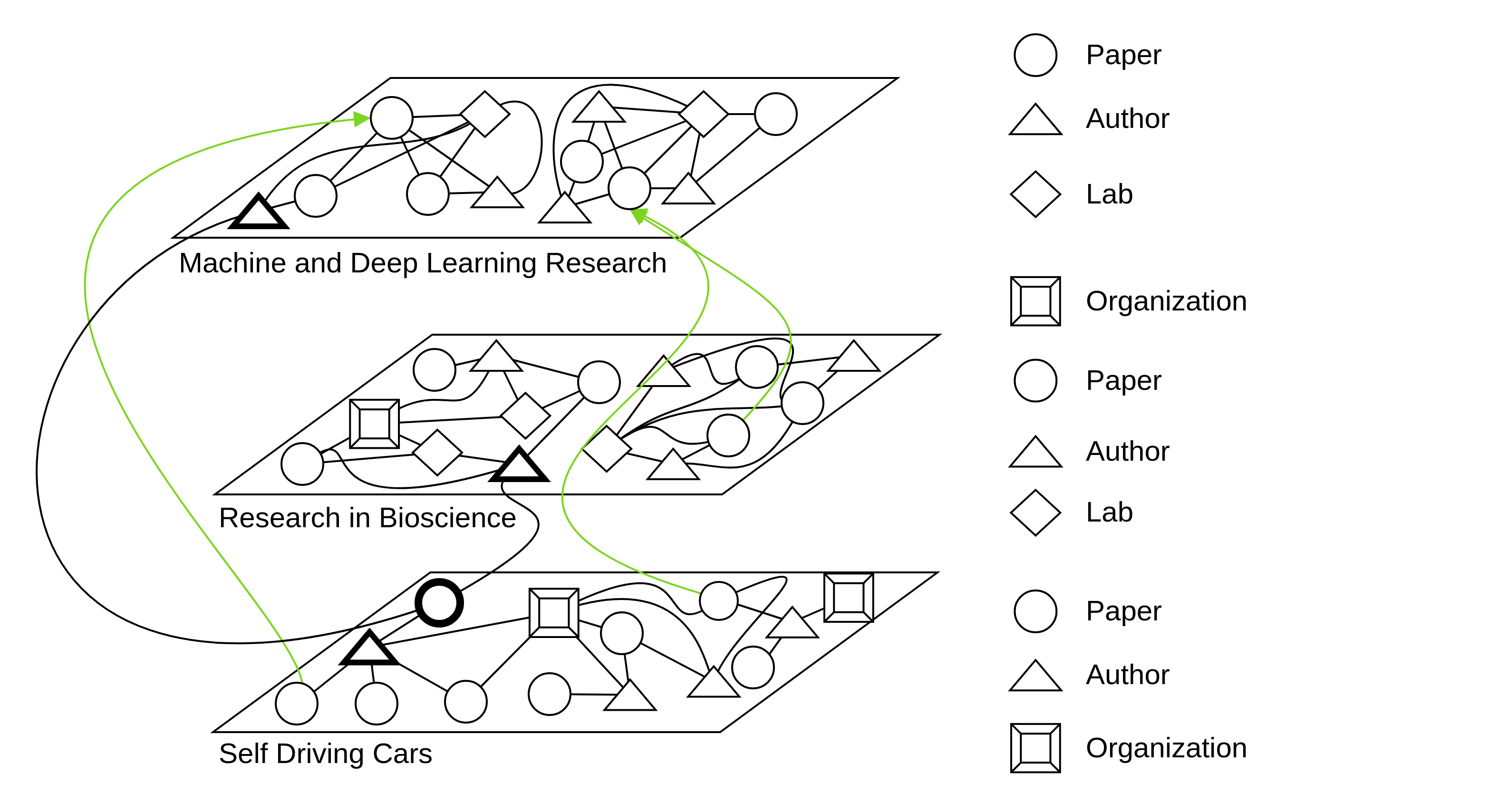}
         \caption{}
         \label{fig:fig2}
     \end{subfigure}
     \hfill
     \begin{subfigure}{0.3\textwidth}
         \centering
         \includegraphics[width=0.7\textwidth]{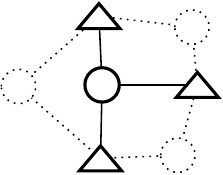}
         \caption{}
         \label{fig:fig2b}
     \end{subfigure}
     \caption{a) An example to demonstrate a hybrid-layered network \textcolor{black}{with node and edge types}. In the figure the black colored edges are undirected and the green colored edges are directed. b) The figure shows the author paper relation for the authors and papers marked in bold in Figure \ref{figure1label}(\subref{fig:fig2}). \textcolor{black}{The circle represents paper and the triangle represents author}. The dotted links and circles are the possibilities of two authors collaborating on a paper in the future.}
     \label{figure1label}
\end{figure}

\begin{definition}[Hybrid Layered Network]\label{defHLN}
A hybrid-layered network (HLN) is defined by quintuple $G = (V, E, L, T, \boldsymbol{\mathcal{R}})$ where $V$ is the set of vertices, $E \subseteq ((V \cross L) \cross (V \cross L))$ is the set of edges, $L$ is the set of layers, $T = \{T_V, T_E\}$ is the set of sets of vertex and edge types and $\boldsymbol{\mathcal{R}}$ is the set of functions. $\boldsymbol{\mathcal{R}}$ contains $3$ primary functions $R_{VT}$: \text{V $\rightarrow$ $T_V$}, $R_{ET}$: \text{E $\rightarrow$ $T_E$} to map vertex and edges to type and, $R_{VL}$: \text{V $\rightarrow$ $2^L \setminus \{\emptyset\}$} to map a vertex to a set of layers. 
\end{definition}

A vertex may be present in many layers and cannot exist outside layers, hence the function $R_{VL}$ maps a vertex to a power set of layers except the null set. For example a node $u$ belonging to $3$ layers $l_1, l_2, l_3$ will have $R_{VL}(v) = \{l_1, l_2, l_3\}$. We denote a node $v$ at layer $l$ as $v^l$ for sake of convenience, i.e., $l \in R_{VL}(v^l)$. There must be at least one layer in an HLN, i.e., $|L| \geq 1$. The set $T_V$ and $T_E$ at minimum contains one type \{$\bot$\} each.
An edge $e = (v_a^{l_b}, v_c^{l_d})$ denotes that there is a directed connection from $v_a$ at layer $l_b$ to $v_c$ at layer $l_d$. An edge is called an intra-edge if $l_b = l_d$ or inter-edge if $l_b \neq l_d$. The set $\boldsymbol{\mathcal{R}}$ contain functions for mapping nodes and edges to their respective types and layers. The functions $R_{VT}$ and $R_{ET}$ map a vertex and an edge to $T_V$ and $T_E$ respectively.

Let us see the model with an example shown in Figure \ref{fig:fig2}. The network contains three layers with layers representing research in the machine and deep learning (Layer 1), bio-science ( Layer 2), and self-driving cars (Layer 3). The figure contains three types of nodes in the Layers $1$ (paper, author, lab) and $2$ (paper, author, organization), and Layer $3$ contains four types of nodes namely paper, author, lab, and organization. There are directed interconnections between layers to show that papers in bio-science or self-driving cars cite another paper in machine and deep learning. The network cannot be represented with a homogeneous network without losing information. Even a multi-layered or heterogeneous network will fail to capture the network due to the presence of multiple types of nodes (and edges) and multiple layers respectively. It may be apparent that a heterogeneous network will describe Figure \ref{fig:fig2} by merging different layers into one. However, a heterogeneous network only partially captures the given network as described in Remark \ref{remarkHet}. 

\begin{rem}[Hybrid Layered Network is not another variant of a heterogeneous network]\label{remarkHet} The Figure \ref{fig:fig2} shows us in bold that three authors from three different layers have collaborated on a paper in the third layer. Consider that these three layers get merged into one, then these three authors and the paper will have the structure as shown in Figure \ref{fig:fig2b}. Once we avoid the layer structure all these three authors become homogeneous. Hence there is no way to separately keep their association with each other in terms of the subjectivity expressed by the layers. For example, the possibility of collaboration between the authors of layer $1$ with $3$, $2$ with $3$ and $1$ with $2$ can be different. The only way we can keep this subjectivity is through the layers while each layer is heterogeneous. HLN can preserve these constraints while the very definition of heterogeneous network is unable to capture the same. \end{rem}

\begin{lemma}\label{multilayerLemma}
The set of all multi-layered networks $\mathcal{M}$ is a subset of the set of all hybrid layered networks $\mathcal{H}_m$.
\end{lemma}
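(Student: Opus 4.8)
The plan is to show that every multi-layered network $M = (Y, G_{intra}, G_{inter})$ in the sense of Definition~\ref{defMultiOld} can be realized as an HMN in the sense of Definition~\ref{defHMN}; that is, to exhibit an explicit injection $\Phi : \mathcal{M} \to \mathcal{H}_m$ and then verify, essentially clause by clause, that $\Phi(M)$ satisfies every requirement in Definition~\ref{defHMN}.

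First I would fix the layer and vertex data: take $L := Y$ and $V := \bigcup_{i \in Y} V_i$, identifying a vertex that occurs in several of the $V_i$ with a single element of $V$. The layer-membership function is then forced: set $R_{VL}(v) := \{\, i \in Y : v \in V_i \,\}$, which is a nonempty subset of $L$ for every $v \in V$ (each vertex of $V$ lies in at least one $V_i$ by construction), so $R_{VL} : V \to 2^{L} \setminus \{\emptyset\}$ is well defined. Next I would build the edge set from the two edge families of $M$: each intra-layer edge of $G_i = (V_i, E_i)$ contributes the pair $(v_a^{\,i}, v_b^{\,i})$, and each inter-layer edge of $G_{ij} = (V_i, V_j, E_{ij})$ contributes $(v_a^{\,i}, v_b^{\,j})$; the resulting $E$ is by construction a subset of $((V \times L) \times (V \times L))$, and every endpoint $v^{\,l}$ appearing in $E$ satisfies $l \in R_{VL}(v)$, as required. (If the intra-layer graphs $G_i$ are undirected, each undirected edge is replaced by the two oriented pairs, or one adopts the paper's orientation convention; this does not affect membership.)

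Since a multi-layered network carries no node or edge types, I would set $T_V := \{\bot\}$, $T_E := \{\bot\}$, $T := \{T_V, T_E\}$, and let $R_{VT}$ and $R_{ET}$ be the constant maps sending every vertex and every edge to $\bot$; this meets the minimality requirement that $T_V$ and $T_E$ each contain at least the single type $\bot$. Finally I would check the remaining side conditions — $|L| = |Y| \geq 1$, the edge-set inclusion $E \subseteq ((V \times L) \times (V \times L))$, and that $\boldsymbol{\mathcal{R}} = \{R_{VT}, R_{ET}, R_{VL}\}$ contains exactly the three required primary functions — so that $G := (V, E, L, T, \boldsymbol{\mathcal{R}})$ is a bona fide HMN. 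Taking $\Phi(M) := G$ then gives $M \in \mathcal{H}_m$, whence $\mathcal{M} \subseteq \mathcal{H}_m$.

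The only real subtlety — and the step I would be most careful about — is the bookkeeping of vertices shared across layers. In Definition~\ref{defMultiOld} each $G_i$ comes with its own vertex set $V_i$, while the inter-layer graphs $G_{ij}$ connect $V_i$ to $V_j$, which already presupposes a consistent identification of ``the same'' node across layers; one must make that identification explicit when forming $V = \bigcup_i V_i$ so that (i) $R_{VL}$ records precisely the layers a node belongs to, and (ii) every inter-layer edge $(v_a^{\,i}, v_b^{\,j})$ indeed has both endpoints in layers to which those vertices belong. Once this is pinned down, the rest is a direct verification against Definition~\ref{defHMN}.
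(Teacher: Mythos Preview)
Your proposal is correct and follows essentially the same construction as the paper: set $L=Y$, take $V=\bigcup_i V_i$, assemble $E$ from the intra- and inter-layer edge families, and define $R_{VL}$ by layer membership. The only cosmetic differences are that the paper dresses the construction up as a proof by contradiction and assigns one vertex/edge type \emph{per layer} rather than your single global type $\bot$; your choice is arguably cleaner, since Definition~\ref{defMultiOld} carries no type data and a per-layer assignment of $R_{VT}$ would be ill-defined for a vertex appearing in several $V_i$.
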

\begin{proof} We will prove this by contradiction. Let us assume that $\exists \ x = (Y, G_{intra}, G_{inter})$ such that $x \in  \mathcal{M}$ but $x \notin \mathcal{H}_m$. Now, $\exists \ y = (V, E, L, T, \boldsymbol{\mathcal{R}}) \in \mathcal{H}_m$ such that
\begin{align*}
    & L &= & \text{ } Y \\
    & G(V, E) &= & \text{ } G_{intra} \cup G_{inter} \text{ where}\\
    & G_{intra} &= & \text{ }\{G_1, G_2, \cdots, G_k\} \text{ where $G_i = (V_i, E_i)$} \\
    & G_{inter} &= &\text{ } \{G_{ij}\} \text{ where } G_{ij} = (V_i \cup V_j, E_{ij})\\
    & V_i &= &\text{ }\{\text{ }v \text{ }|\text{ } v \in V \text{ \& } i \in R_{VL}(v)\text{ }\} \\
    & E_i &= &\text{ }\{\text{ }(v_j, v_k) \text{ }| \text{ }(v_j^{l_i}, v_k^{l_i}) \in E\text{ }\} \\
    & E_{ij} &= &\text{ } \{\text{ }(v_k, v_m) \text{ }|\text{ } (v_k^{l_i}, v_m^{l_j}) \in E \text{ }\}\\
\end{align*}
The set $V = \bigcup_{i \in L} V_i$ and $E = \bigcup_{i \in L} E_i$.
Each of the graphs in a particular layer in $x$ is homogeneous (Definition \ref{defHomo}) but different layers may have different types of nodes. The functions $R_{VT}$ and $R_{ET}$ map all vertices and edges of a single layer to one value in the set $T_V$ and $T_E$ respectively. The presence of such a $y$ using which we can create an $x$ contradicts with our assumption. Thus we show that the set of all multi-layered networks is a subset of the set of all hybrid layered networks. \end{proof}

\begin{corollary}\label{multiplexcorollary}
A multiplex network is a special case of an HLN.
\end{corollary}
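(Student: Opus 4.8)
The plan is to exhibit, for an arbitrary multiplex network, an HMN that carries exactly the same information, so that $\mathcal{P} \subseteq \mathcal{H}_m$, where $\mathcal{P}$ denotes the set of all multiplex networks; ``special case'' then means this inclusion together with the observation that it is proper. I would mirror the contradiction-style argument of Lemma \ref{multilayerLemma}, or equivalently give the construction directly.

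Given a multiplex network $M = (V, E, L)$ as in Definition \ref{multiplexnetworkdefn}, I would define an HMN $G = (V, E', L, T, \boldsymbol{\mathcal{R}})$ by keeping the same vertex set $V$ and layer set $L$; setting $T_V = \{\bot\}$ and $T_E = \{\bot\}$ with $R_{VT}$ and $R_{ET}$ the constant maps onto $\bot$ (a multiplex has no node/edge types, so the induced HMN is homogeneous); setting $R_{VL}(v) = L$ for every $v \in V$, which is the formal translation of the multiplex property that the vertex set is common across all layers; and defining $E' = \{\, (x^{l}, y^{l}) \mid (x,y,l) \in E \,\}$ (for an undirected multiplex one also adds $(y^{l}, x^{l})$). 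Every such edge is an intra-edge, hence $E' \subseteq ((V \cross L) \cross (V \cross L))$ as required; since $L \neq \emptyset$ we have $R_{VL}(v) \in 2^{L} \setminus \{\emptyset\}$; and $|L| \geq 1$. Thus $G$ satisfies Definition \ref{defHMN}, and $M \mapsto G$ is injective and information-preserving, giving $\mathcal{P} \subseteq \mathcal{H}_m$.

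An alternative, perhaps more transparent route is to first observe that a multiplex network is precisely a multi-layered network (Definition \ref{defMultiOld}) in which $V_i = V$ for all $i \in L$ and every inter-layer edge set $E_{ij}$ is empty (or is the diagonal identity coupling), and then invoke Lemma \ref{multilayerLemma} to conclude membership in $\mathcal{H}_m$. The only point requiring care --- and the main, rather mild, obstacle --- is the bookkeeping in passing from the multiplex description (a common vertex set together with layer-tagged edges $(x,y,l)$) to the layer-indexed formalism of the HMN: one must check that ``$R_{VL}(v) = L$ for all $v$'' is exactly what ``common across layers'' means, and that the translation of edges neither loses nor introduces information. Once this identification is fixed, verifying the defining conditions of Definition \ref{defHMN} is routine.

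Finally I would record that the inclusion $\mathcal{P} \subseteq \mathcal{H}_m$ is strict: an HMN with $|T_V| \geq 2$, or with some vertex $v$ satisfying $\emptyset \neq R_{VL}(v) \subsetneq L$, or with a nonempty set of inter-edges, is not the image of any multiplex network under the above correspondence. Hence a multiplex network is properly a special case of an HMN, which is the claim of the corollary.
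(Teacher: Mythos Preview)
Your proposal is correct, and your ``alternative route'' --- viewing a multiplex as a multi-layered network with $V_i = V$ for all $i$ and $E_{ij} = \emptyset$, then invoking Lemma~\ref{multilayerLemma} --- is exactly the argument the paper gives. Your lead argument, the direct construction of the HMN from Definition~\ref{multiplexnetworkdefn}, is a valid alternative that bypasses the intermediate multi-layered representation; it is slightly more explicit about what $R_{VL}$ and $E'$ look like, at the cost of not reusing the lemma already proved.

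One small point of difference worth noting: the paper takes $T_E = \{T'_1,\dots,T'_{|L|}\}$, i.e.\ one edge type per layer, rather than your $T_E = \{\bot\}$. Both choices yield valid HMNs and both preserve all information (the layer tag on an edge is already recorded in the HMN edge $(x^l,y^l)$), but the paper's choice more directly reflects the multiplex semantics that each layer encodes a distinct relation. This does not affect correctness, only modelling taste.
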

\begin{proof} A multiplex network is a multilayered network where every layer has the same vertex set so there is no need for interconnections between the layers. In a multiplex network $M = (Y, G_{intra}, G_{inter})$, $G_{intra}$ is $(G_1, G_2, \cdots,G_k)$ and $G_{inter} = \{G_{ij}\}$ where $G_i = (V_i, E_i)$ and $G_{ij} = (V_i \cup V_j, E_{ij})$ with $V_1= V_2 = V_3 = \cdots = V_{|Y|} = V$ and $E_{ij}= \emptyset$ $\forall G_{ij}$. In this case $T_V = \{T_1\}$ and $T_E = \{T'_{1},\cdots,T'_{|L|}\}$. Thus a multiplex network is a special case of a multi-layered network making it a special case of an HLN.
\end{proof}

\begin{lemma}\label{heterogeneoustoHLN}
The set of all heterogeneous networks $\mathcal{H}$ is a subset of the set of all hybrid layered networks $\mathcal{H}_m$.
\end{lemma}
\begin{proof} We will prove this by contradiction. Let us assume that $$\exists \ x = (V_{het}, E_{het}, \{A,B\},\{f_1, f_2\}) \in \mathcal{H}$$ such that $x \notin \mathcal{H}_m$. Now, $y = (V, E, L, T, \boldsymbol{\mathcal{R}})$ with the following values for the parameters is in $\mathcal{H}_m$.

\begin{tabular}{lllll}
& & & &\\
& & & $(V, E) = (V_{het}, E_{het})$  & $L = \{1\}$ \\
& & & $T_V = A$   & $T_E = B$  \\
& & & &
\end{tabular}

Note that $R_{VT} \equiv f_1$ and $R_{ET} \equiv f_2$ based on the definition of heterogeneous networks. The function $R_{VL}$ maps to default set as there is a single layer. Considering all the parameters of $y$ are generated from the parameters of $x$, it is proved that for every $x \in \mathcal{H}$ there exists a corresponding $y \in \mathcal{H}_m$ such that $x \equiv y$. Thus, $x \in \mathcal{H} \implies y \in \mathcal{H}_m$ where $x \equiv y$ which shows that $\mathcal{H} \subset \mathcal{H}_m$. We use the notation $\subset$ instead of $\subseteq$ as $\mathcal{H}$ can never be equal to $\mathcal{H}_m$ due to the presence of layers in $\mathcal{H}_m$.  \end{proof}

\begin{lemma}\label{homogeneoustoHLN}
The set of all homogeneous networks $\mathcal{S}$ is a subset of the set of all hybrid layered networks $\mathcal{H}_m$.
\end{lemma}
\begin{proof} We will prove this by contradiction. Let us assume that $\exists \ x = (V_{homo}, E_{homo})  \in \mathcal{S}$ such that $x \notin \mathcal{H}_m$. Now, $y = (V, E, L, T, \boldsymbol{\mathcal{R}})$ with the following values for the parameters is in $\mathcal{H}_m$.

\begin{tabular}{lllll}
& & & &\\
& & & $(V, E) = (V_{homo}, E_{homo})$ & $L = \{1\}$  \\
& & & $T_V =\{\bot\}$ & $T_E = \{\bot\}$ \\
& & & &
\end{tabular}

Note that $R_{VT}$ maps to $T_V$ and $R_{ET}$ maps to $T_E$. The function $R_{VL}$ maps to default set as there is a single layer. Considering all the parameters of $y$ are generated from the parameters of $x$, it is proved that for every $x \in \mathcal{S}$ there exists a corresponding $y \in \mathcal{H}_m$ such that $x \equiv y$. Thus, $x \in \mathcal{S} \implies y \in \mathcal{H}_m$ where $x \equiv y$ which shows that $\mathcal{S} \subset \mathcal{H}_m$.\end{proof}
\textcolor{black}{It must be noted that the definition of $\boldsymbol{{\mathcal{R}}}$ can contain additional functions. For example, all the nodes in a layer $L$ can be returned by a function say $R_{LV}$, and all nodes of type $T$ in a layer $L$ can be returned by a function say $R_{TL}$. When we want all node types in a layer we can represent $R_{TL}$ as $R_{L}$. In other words, we can add other functions to  $\boldsymbol{\mathcal{R}}$ as required. This makes the definition of HLN extendable for different contexts. The addition of the functions $R_{L}$ and $R_{TL}$ do not alter the definitions and proofs as mentioned earlier.}

\subsection{\textcolor{black}{Advantages of HLN through Examples}}
\label{sec:adv:hln}
\begin{figure}
  \centering
  \includegraphics[width=8cm,height=3.5cm]{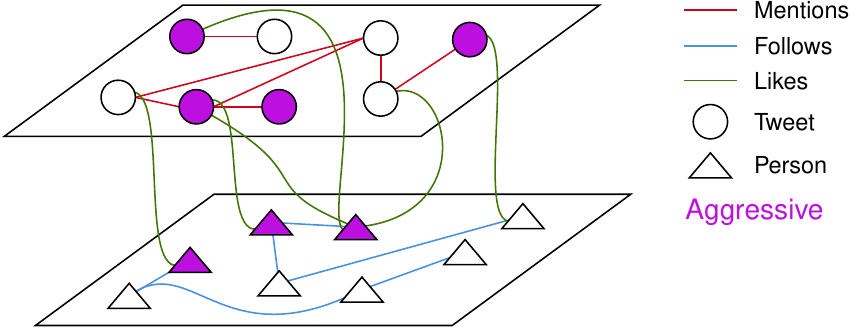}
  \caption{An example twitter network with each rectangle representing a layer. The first layer (represented by circles) contains tweets and the second layer (represented by triangles) contains users. }
  \label{twitter}
 \end{figure}
\textcolor{black}{Let us consider the network shown in the Figure \ref{twitter}. The network represents a real life twitter network which is heterogeneous and multi-layered at the same time. The first layer contains tweets {(represented by circles)}. There is a connection between two tweets using the same hashtag. The second layer (represented by triangles) contains users with links between two users indicating one follows the other. In addition to that a user or tweet can be aggressive or non aggressive represented using the color. The inter-layer links represent a user liking a tweet. The above network is represented as a HLN in Figure \ref{twitter}. We cannot represent this information using the definitions of \cite{gyanendro-singh-etal-2020-sentiment, Bianconi2018, 9756648, 10531187} as there are multiple types of relations in a layer and the same node is not be present in all the layers. One may argue, that the same information can be represented as a heterogeneous network \cite{7536145, 8683574} as shown in Figure \ref{twitter_hetero}, however, the complexity will increase in many folds as described here. In Figure \ref{twitter_hetero} we can see that the types of nodes doubled, $i.e.$, a node can be of two types (user, tweet) where each type can have two subtypes (aggressive, non-aggressive). Now let us consider a case where a user can be mildly aggressive, moderately aggressive, severely aggressive and non-aggressive. In that case a heterogeneous network will have $2*4 = 8$ types of nodes and $3$ types of edges making a total of $11$ types to substitute $2$ layers with $4$ types in a HLN. It must be noted that two layers can intrinsically mean $3$ types of edges ($2$ intra and $1$ inter) without explicit markup. Similarly, increasing one layer will require $3*4=12$ node types and $3(intra) + 3(inter) = 6$ edge types in a heterogeneous network increasing the total number of types to $18$ from $11$. In contrast, a HLN will require $3$ layers with $4$ types to represent the same. In fact, a HLN intrinsically stores $\Comb{|L|}{2}$ edge types which require explicit definitions in a heterogeneous network. This in-turn increases the computational complexity of certain tasks (example in the following paragraph) in a heterogeneous representation of the network. In other words, the proposed HLN model's advantage is at the abstraction level, which simplifies the data structure for complex graphs.} {When we compare our representation with existing representations like \cite{Pio-Lopez2021} we find that our model can store as many representations as necessary in a layer whereas existing models like MultiVERSE requires one layer for each type of node.}
\begin{figure}
  \centering
  \includegraphics[width=8cm,height=3.5cm]{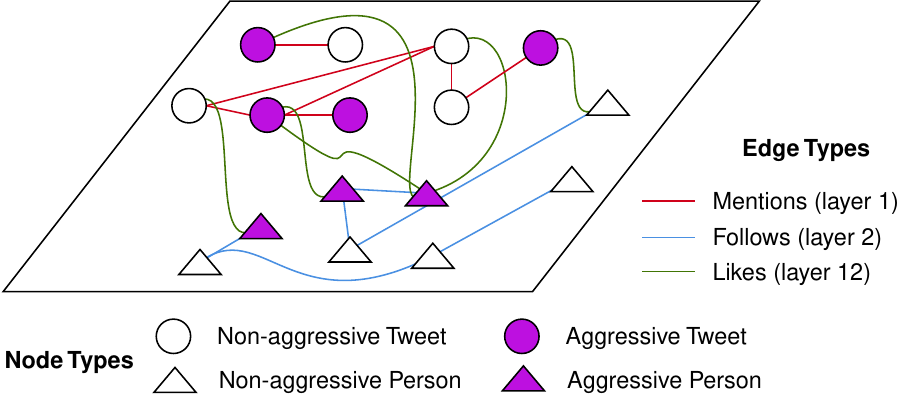}
  \caption{{An example showing twitter represented as a heterogeneous network.}}
  \label{twitter_hetero}
 \end{figure}

\textcolor{black}{Let us consider the application of HLN through the lens of link prediction in the given Twitter network. We consider the Jaccard Co-efficient for scoring a possible edge $(x,y)$ which can be defined as $JC(x,y) = \frac{N(x)\cap N(y)}{N(x) \cup N(y)}$ where $N(x)$ defines the neighbors of node $x$. In a heterogeneous setting we may need to consider the neighbors of a particular type. Considering the network in Figure \ref{twitter_hetero} if we need to find neighbors $N(x)$ of a particular node $x$ of type (say tweet) we need $O(V)$ time for each node $x$ in the worst case as shown in snippet $1$ below. In the same setting, we need $O(k)$ time in our HLN with the help of function $R_{TL}$ which returns all $k$ vertices of a particular type in a given layer, as shown in the snippet $2$. In a real Twitter network, $k << V$. We use the function $R_{VL}$ to obtain the layer information for node $x$.}
\begin{Verbatim}[commandchars=\\\{\},codes={\catcode`$=3\catcode`_=8}]
1. for n in N(x):
    if n.type == x.type:
      N\textsubscript{T}(x).add(n)
2. for n in {N(x) U $R\textsubscript{TL}(x.type, R\textsubscript{VL}(x))$}:
    N\textsubscript{T}(x).add(n)
\end{Verbatim}
\subsubsection{Using HLN improves existing tasks}
The use of layers helps reduce types and query complexity as explained above. Now we show that layers can add additional  domain knowledge to an existing heterogeneous network. In fact, we can say that a layer is meta-information about a heterogeneous network that is inherently present but not apparently visible. HLN provides a systematic way of managing this meta-information starting from the definitions to the data structure.  {We use the link prediction task with and without layer information on the MovieLens \cite{10.1145/2827872} dataset to prove our claim. MovieLens is a heterogeneous dataset with two types of nodes namely movie and user with on type of node between them namely user rates movie. There are $9742$ user nodes, $610$ movie nodes and $100836$ edges. In the link prediction task we divide the edges in the dataset into two parts training, validation and testing. We train $2$ layers GNN models like \cite{10.5555/3294771.3294869,10.1609/aaai.v33i01.33014602, velickovic2018graph, bresson2018residual, brody2022how, ijcai2021p214, 10.5555/2969442.2969488, DBLP:journals/corr/abs-2006-07739, 10.5555/3495724.3497151, 10.5555/3524938.3525045} on the training dataset and try to predict the links on the validation and test dataset. We create two versions of the MovieLens dataset with and without layer numbers. In the version with layer numbers we encoded the domain knowledge that people who like at least one sci-fi movie rate movies differently when compared to people who like other genres of movies. We have moved sci-fi movies and users who have seen and rated at least one of them to layer $1$ and other users and movies to layer $2$. We encode the layer information into the feature vector by adding a bit where 0 represents layer 1 and 1 represents layer 2. In the unlayered version we use the actual feature vector which indicates all nodes belong to a single layer.} We try to predict links on such a network to see what movie a user will watch (and rate) next. For the link prediction task, we use state-of-the-art GNN architectures available. We run the GNN models on two different feature sets of the same network, the first being the feature set of the original heterogeneous network and the second with a layer dimension added to the feature vector of each movie and user. The results in Table \ref{mrTable4} show an increase in the area under the curve for all models when we use layers. {We use t-SNE (with perplexity = $30.0$, learning rate of $86.26$ and other parameter set to default as in the sklearn library) for reducing the dimensionality of the node features generated by our model to 2-dimensions}. The t-SNE plot shown in Figure \ref{embedding} shows that the embedding of movies belonging to the sci-fi genre is closer as is the embedding for the users watching that genre clearly showing the relevance of adding layers.
\begin{figure*}
     \centering
     \begin{subfigure}[b]{0.24\textwidth}
         \centering
         \includegraphics[width=\textwidth]{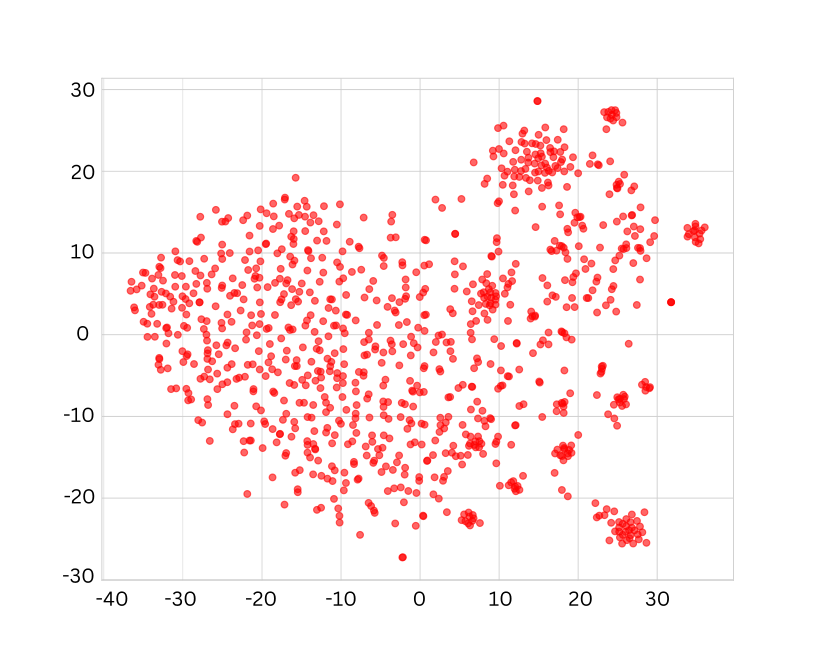}
         \caption{Movie - layer}
     \end{subfigure}
     \begin{subfigure}[b]{0.24\textwidth}
         \centering
         \includegraphics[width=\textwidth]{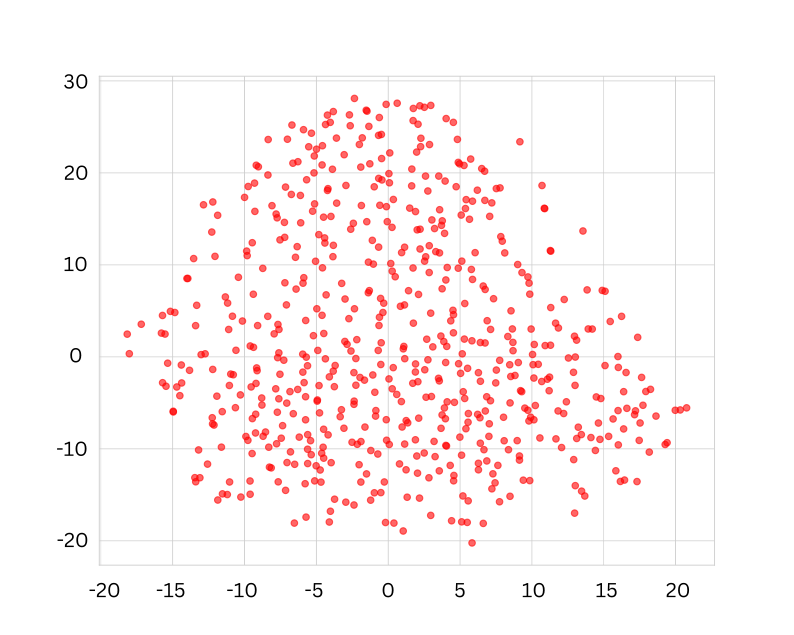}
         \caption{User - layer}
     \end{subfigure}
     \begin{subfigure}[b]{0.24\textwidth}
         \centering
         \includegraphics[width=\textwidth]{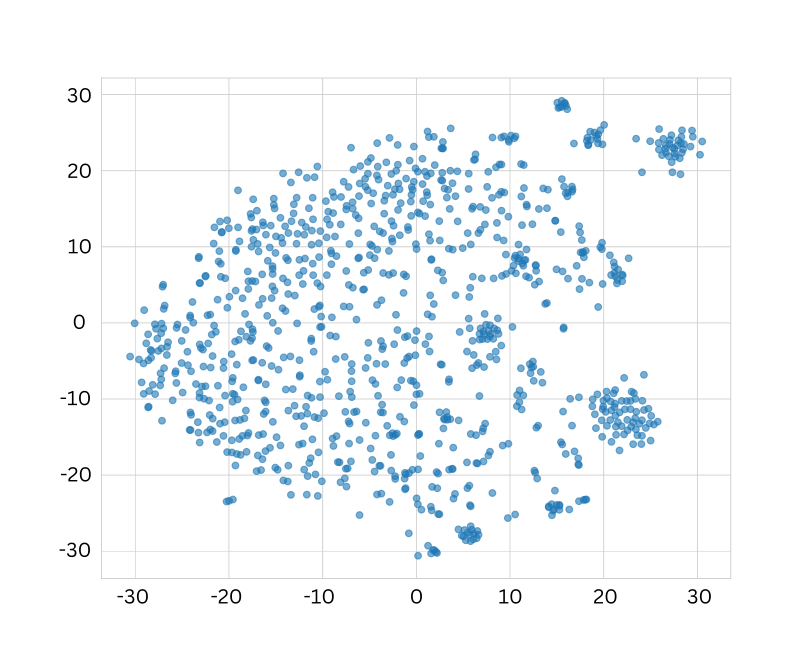}
         \caption{Movie + layer}
     \end{subfigure}
     \begin{subfigure}[b]{0.24\textwidth}
         \centering
         \includegraphics[width=\textwidth]{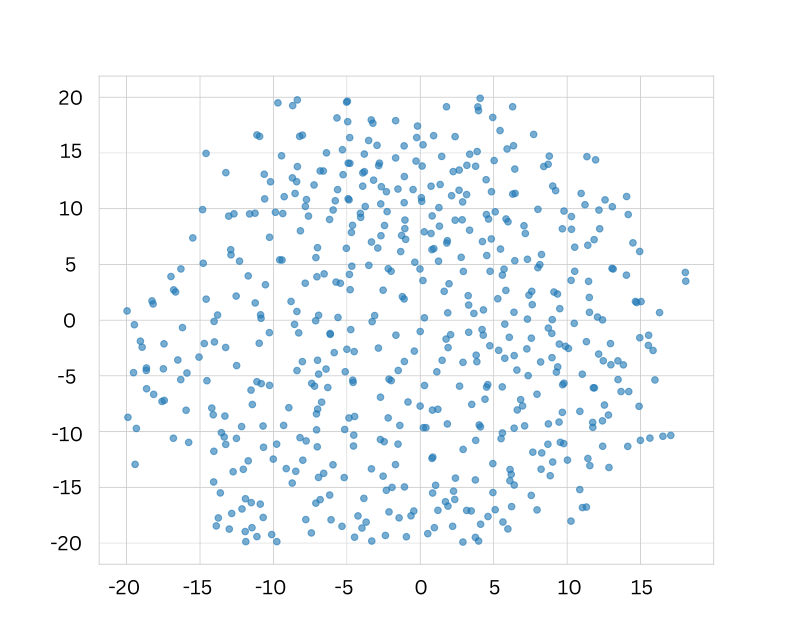}
         \caption{User + layer}
     \end{subfigure}
     \caption{Figures \textbf{a} and \textbf{b} show the t-SNE plot of the feature vector for the movie and user nodes without layer information. Figures \textbf{c} and \textbf{d} show the feature vectors after adding layer information. We can clearly see that the feature vectors have condensed after adding layer information bringing structurally similar nodes closer to one another thus increasing the result of link prediction. Features are obtained from GraphSAGE model.}
     \label{embedding}
\end{figure*}
\begin{table}[h]
    \centering
    \caption{\textbf{AUC} for the Link Prediction Task with and without layers on the movie-lens heterogeneous dataset.}
    
    \begin{tabular}{|l|l|l|}
    \hline
        \textbf{Model} & \textbf{Without Layer} & \textbf{With Layer} \\ \hline
        \rule{0pt}{2ex}
        \rule{0pt}{2ex}
        SAGEConv \cite{10.5555/3294771.3294869} & 92.77 & \textbf{94.11} \\ \hline
        \rule{0pt}{2ex}
        Graphconv \cite{10.1609/aaai.v33i01.33014602} & 76.2 & \textbf{80.87} \\ \hline
        \rule{0pt}{2ex}
        GATConv \cite{velickovic2018graph} & 88.75 & \textbf{89.77} \\ \hline
        \rule{0pt}{2ex}
        ResGatedGraphConv \cite{bresson2018residual} & 86.62 & 86.63 \\ \hline
        \rule{0pt}{2ex}
        GATv2Conv \cite{brody2022how} & 89.56 & \textbf{90.78} \\ \hline
        \rule{0pt}{2ex}
        TransformerConv \cite{ijcai2021p214} & 92.87 & \textbf{94.13} \\ \hline
        \rule{0pt}{2ex}
        MFConv \cite{10.5555/2969442.2969488} & 81.89 & \textbf{85.46} \\ \hline
        \rule{0pt}{2ex}
        GENConv \cite{DBLP:journals/corr/abs-2006-07739} & 93.43 & \textbf{94.51} \\ \hline
        \rule{0pt}{2ex}
        GeneralConv \cite{10.5555/3495724.3497151} & 77.61 & \textbf{83.84} \\ \hline
        \rule{0pt}{2ex}
        FiLMConv \cite{10.5555/3524938.3525045} & 75.07 & \textbf{80.34} \\ \hline
    \end{tabular}
    
    \label{mrTable4}
\end{table}
\section{Synthetic HLN Generation}
\label{algorithm}
We proposed a novel parameterized algorithm for generating different hybrid layered networks in this section. The proposed algorithm can generate a multi-layered, heterogeneous, and homogeneous network using different values of the parameters as described in the Lemmas \ref{multilayerLemma}, \ref{heterogeneoustoHLN}, and \ref{homogeneoustoHLN} respectively.

\subsection{Algorithm}
The Algorithms of \ref{algo1} and \ref{algo2} generate an HLN. The algorithms work as follows. At time step $t$ a new node $n$ is added to the initially empty network $G$. The number of nodes that can be added to the network is limited by the parameter $N$. The node $u$ is assigned to a layer $l$ of $L$ uniformly randomly. The type of $u$ is assigned uniformly randomly from $R_L(l)$ where $R_L(i) = \{T_i\}_{i \in L}$ and $T_i \subseteq T_V$. \textcolor{black}{Since different real-world datasets have different type distributions, we can assign node types based on any other distribution without changing any other part of the algorithm}. The added node $u$ connects with other nodes in the same layer and other layers using \textcolor{black}{preferential attachment. The preference of a node is decided based on its own degree and the degree of its neighbours (Algorithm \ref{algo2}, Lines 6-7). This makes the algorithm capable of generating power law and other types of networks.  The parameters $\alpha$ and $\beta$ decide the weightage to be given to a node's own degree and the degree of its neighbours, respectively.} The minimum number of connections a node makes with other nodes (in the same and different layers) is decided by the parameter $m$, a $L\times L$ matrix. For making intra-layer connections, the function \textit{connection1} takes an induced graph $G_{ii}$ from the HLN $G$ where $G_{ij} = I(G, V_i \cup V_j)$. The induced graph can also be called a subHLN. An induced graph comes with all the types of nodes and edges associated with vertex set $V_i$ in layer $i$ and $V_j$ in layer $j$. For the inter-layer connections with the node $u$, the function \textit{connection2} takes three induced graphs $G_{ii}$, $G_{jj}$ and $G_{ij}$ where $i = l$ and $j \in L \setminus l$. Let us consider a situation where no nodes are in the inter-layer subHLN $G_{12}$ and a new node ($u$) is assigned to the layer $1$. In making a cross-layer connection with layer $2$, if there are sufficient nodes ($> m_{12}$) in the layer $2$, then $m_{12}$ nodes are selected at random from layer $2$ and connected with $u$. If the number of nodes in layer $2$ $< m_{12}$, then we store the current node in a list so that the edges with $u$ can be created once there are sufficient nodes in layer $2$. 

\begin{rem}
The above algorithm assigns a node to a single layer, making it incapable of generating a multiplex network without certain modifications.
\end{rem}
\subsection{\textcolor{black}{Complexity Analysis and Scalability}}
\textcolor{black}{Adding a new node in layer $l$ triggers two functions, \textit{conn1} and \textit{conn2}, for intra layer and inter layer link generation. The algorithm establishes the intra layer links in $O(m)$ time where $m = M_{ll}$. Here $M_{ll}$ denotes the minimum number of connections a node makes in its layer. Following the intra layer links, we make inter layer connections for the node with all other layers in $O(|L|m')$ considering the worst case. where $|L|$ denotes the number of layers and $m' = \max\limits_{j \in L}{(M_{lj})}$. It must be noted that the worst case will arise when the node needs to connect to every other node in every layer.}

\DontPrintSemicolon
\begin{algorithm}[t]
\caption{Generating an HLN}
\label{algo1}
    \KwIn{$N, L, R_L, m, \textcolor{black}{\alpha, \beta}$}
    \KwOut{Return HLN G}
    $node \leftarrow 1, R_{VL} \leftarrow \{\}$\;
    $G = \hbox to 1.5cm{\textit{Empty HLN}}$\;
    \While {$node < N$}{
        $i = uniformRandom(L)$ \;
        $R_{VL}(node) = i$ \;
        $t = uniformRandom(R_L(i))$\;
        {$G_{ii}.addNode(node)$}\;
        \textcolor{black}{\hbox to 5cm{$G_{ii} = \textit{connection1}(G_{ii}, node, M_{ii}, \alpha, \beta)$} \;}
        \While{$j \in L$}{
            \If {$i \neq j$}{
                \textcolor{black}{$G_{ij} = {\textit{connection2}(G_{ii}, G_{jj}, G_{ij}, node, M_{ij}, \alpha, \beta)}$ \;}
            }
        }
        $node = node + 1$;
    }
\end{algorithm}

\begin{algorithm}
\caption{connection1() for intra layer and connection2() for inter layer connections}
\label{algo2}
\KwIn{$g_1, g_2(optional), g_3(optional), node, m, \textcolor{black}{\alpha, \beta}$}
$G = emptyGraph()$\;
\textcolor{black}{\Fn{nodeDistribution ($G, \alpha, \beta$)}
{
    \textcolor{black}{{$nodesDist = [\text{ }]$}\;
        \For{$i \in G.nodes$}
        {
            {$j = 0$}\;
            {$c = int(\alpha * f_{deg}(i) + \beta * f_{neighbor\_deg}(i))$}\;
            \While{$j < c$}
            {
                {$nodesDist.add(i)$}\;
            }
        }}
}}
\Fn{connection1 ($g_1, node, m, \alpha, \beta$)}
{
    \If {$count(g_1.nodes) < m$ }
    {
        \textit{return $G = g_1$} \;
    }
    \If { $count(g_1.nodes) = m$ }
    {
        {$G = starGraph(g_1.nodes)$} \;
    }
    \If{$g_1.nodes > m$}
    {
        {$G.edges = g_1.edges$}\;
        \textcolor{black}{{$nodesDist = nodeDistribution(G, \alpha, \beta$})\;
        {$targets = uniformRandom(nodesDist, m)$}\;}
        {$newEdges = [(node,i)\textit{ }|\textit{ }i \in targets]$}\;
        {$G.addEdges(newEdges)$} \;
    }
    \KwOut{return intra graph $G$}
}
\Fn{connection2 ($g_1, g_2, g_3, node, m, \alpha, \beta$)}
{
    \If{ $count(g_3.edges) = 0$ and $count(g_2.nodes) < m$}
    {
        {$g_3.add(node)$} \;
        \textit{return $g_3$} \;
    }
    
    \If{$count(g_3.edges) = 0$}
    {
        $targets = uniformRandom(g_2.nodes, m)$ \;
        \For{$vertex \in g_3.nodes$}
        {
            \For{each \textbf{item} in $targets$}
            {
                $g_3.addEdge((vertex,item))$ \;
            }
        }
    }
    $a = g_2.edges$ \;
    $b = g_3.edges$ \;
    $a = a.add(b)$ \;
    $G.addEdges(a)$ \;
    \textcolor{black}{$nodesDist = nodesDistribution(G, \alpha, \beta)$ \;
    $nodesDist=[i\textit{ }|\textit{ } i \in nodesDist \text{ $\&$ } i \not \in g_1.nodes]$ \;
    $targets = uniformRandom(nodesDist, m)$ \;}
    \For{each \textbf{item} in $targets$}
    {
        $g_3.addEdge((node,item))$ \;
    }
    \KwOut{return inter graph $g_3$}
}
\end{algorithm}


\section{Structural Measures of HLN} \label{neighborhood}
In this section, we define some of the structural measures of HLN. By definition HLN uses direction for an edge $e \in E$. However, many network measures consider the in-links and out-links together. When applicable, notations related to in-links and out-links are superscripted with $IN$ and $OUT$ respectively in the following text.

\begin{definition}[Out/In-Neighborhood in HLN]
The Out/In-neighborhood of a node $v^l$ is defined by the connected nodes from/to the node $v^l$ to all the nodes situated in any layer in a set of layers $\boldsymbol{\mathcal{L}}$ and having a type $t \in \boldsymbol{\mathcal{T}}$. That is, 

\begin{equation}
\label{eqn4}
\footnotesize
\begin{aligned}
    N^{IN}(v^l, \mathcal{L}, \mathcal{T}) & = & \{u^k |(u^{k}, v^l)\in E, k \in \mathcal{L}, R_{VT}(u) \in \mathcal{T} \} \\
    N^{OUT}(v^l, \mathcal{L}, \mathcal{T}) & = & \{u^k |(v^l, u^{k})\in E, k \in \mathcal{L}, R_{VT}(u) \in \mathcal{T} \}
 \end{aligned}
 \end{equation}
\end{definition}
\begin{rem}
The definition of neighborhood is flexible to include as many types of nodes and layers we wish to take. To get all types of neighbors in all the layers we set $\boldsymbol{\mathcal{T}} = T_V$ and $\boldsymbol{\mathcal{L}} = L$ where $T_V$ and $L$ denote all vertices and layers respectively.
\end{rem}
\begin{rem}
 A node can be present in more than one layers. One should note that the definition of neighborhood presented here does not contain the neighbors that the same node $v$ in layer $k$ may have where $k \neq l$.
\end{rem}

 \begin{definition}[Neighborhood in HLN]\label{defNeighborhoodHLN}
The neighborhood of a node $v^l$ is defined by $ N(v^l, \mathcal{L}, \mathcal{T}) = N^{IN}(v^l, \mathcal{L}, \mathcal{T}) \cup N^{OUT}(v^l, \mathcal{L}, \mathcal{T})$.
\end{definition}

\begin{definition}[HLN Degree Centrality]\label{HLNDeg}
Given $\mathcal{L}$ and $\mathcal{T}$ the degree centrality ($DC$) of a node $v^l$ in an HLN is the ratio of the number of neighboring nodes of $v^l$ having type in $\mathcal{T}$ and belonging to a layer in $\mathcal{L}$ to the count of all nodes of type in $\mathcal{T}$ and any layer in $\mathcal{L}$. That is,
\end{definition}
\begin{equation}
\footnotesize
\label{eqn9}
\begin{gathered}
DC(v^l, \mathcal{L}, \mathcal{T}) = \frac{|N(v^l, \mathcal{L}, \mathcal{T})|}{|\{u^k | k \in \mathcal{L}, u \in V, u^k \neq v^l, R_{VT}(u) \in \mathcal{T} \}|}
\end{gathered}
\end{equation}
\begin{definition}[Shortest Path in HLN]
Given $\mathcal{L}$ and $\mathcal{T}$ the shortest path between two nodes $v^l$ and $w^k$ in an HLN is a path through the nodes of any layer in $\mathcal{L}$ and type in $\mathcal{T}$ such that the sum of the weights (in case of a unweighted HLN the weights of all edges are $1$) of the edges in the path is minimized. There can be more than one shortest path between two nodes and the set of all such shortest paths is denoted by $sp(v^l,w^k)$. The quantity $d(v^l, w^k)$ is the sum of the weights on the edges of a shortest path between $v^l$ and $w^k$. When there is no path between $v^l$ and $w^k$ the $d(v^l, w^k)$ is $\infty$.
\end{definition}

\begin{definition}[HLN Betweeness Centrality]
Given $\mathcal{L}$ and $\mathcal{T}$ the betweeness centrality of a node $v^l$ in a hybrid layered network is the fraction of shortest paths between any two nodes $x^k$ and $y^j$ (where $R_{VT}(x),R_{VT}(y)\in\mathcal{T}, k,j\in\mathcal{L}$) passing through node $v^l$ among all the shortest paths between $x^k$ and $y^j$. If there is no path between $x^k$ and $y^j$ then $\frac{|sp(x^k,y^j|v^l)|}{|sp(x^k,y^j)|}$ is considered to be $0$. That is,
 \end{definition}
 
\begin{equation}
\label{eqn10}
\footnotesize
\begin{gathered}
BC(v^l, \mathcal{L}, \mathcal{T}) = \sum_{x^k,y^j \in V'}{\frac{|sp(x^k,y^j|v^l)|}{|sp(x^k,y^j)|}}\\\text{ where }
V' = \{u^i | i \in \mathcal{L}, R_{VT}(u) \in \mathcal{T}, u \in V, u^i \neq v^l \}
\end{gathered}
\end{equation}
If we are considering only cross layered connections then we can set the $layers$ variable to $L - R_{VL}(v^l)$. The cross layered betweeness will indicate the importance of a node outside its own layer.
 \begin{definition}[HLN Closeness Centrality]
Given $\mathcal{L}$ and $\mathcal{T}$ the closeness centrality of a node $v^l$ in a hybrid layered network is the average shortest path length from $v^l$ to all other nodes of a layer in $\mathcal{L}$ and type in $\mathcal{T}$ in the network. 
\end{definition}
\begin{equation}
\label{eqn12}
\footnotesize
\begin{gathered}
CC(v^l, \mathcal{L}, \mathcal{T}) = \sum_{u^k \in V'}\frac{1}{d(v^l, u^k)} \\
\text{ where } V' = \{ u^k | u \in V, u^k \neq v^l, k \in \mathcal{L}, R_{VT}(u) \in \mathcal{T} \}
\end{gathered}
\end{equation}

 \begin{lemma}\label{lemmahomogeneous}
 Given an HLN $G = (V, E, T, \mathcal{R})$ with $|L| = 1$ and $T_V, T_E = \{\bot\}$, i.e., when an HLN is a homogeneous network (Lemma \ref{homogeneoustoHLN}), the neighborhood of a node $v^l \in V$ is equivalent to the neighborhood of $v$ in a homogeneous network.
 \end{lemma}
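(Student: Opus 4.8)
The plan is to specialize the general HMN neighbourhood of Definition~\ref{defNeighborhoodHMN} to the degenerate parameters supplied by Lemma~\ref{homogeneoustoHMN} and verify that every qualifier appearing in Equation~\eqref{eqn4} becomes vacuous, so that what is left is literally the homogeneous neighbourhood of Definition~\ref{neighHomo}. The first step is to pin down the only meaningful choice of the auxiliary arguments of the HMN neighbourhood, namely $\mathcal{L}=L$ and $\mathcal{T}=T_V$ (the ``all layers, all types'' setting highlighted in the remark following Equation~\eqref{eqn4}); with a single layer and a single vertex type any other choice is either empty or a relabelling of this one, so no generality is lost. The edge-type set $T_E=\{\bot\}$ plays no role here, since the neighbourhood depends only on $R_{VT}$.

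Next I would substitute $L=\{1\}$ and $T_V=\{\bot\}$ into the definitions of $N^{IN}$ and $N^{OUT}$. The condition $k\in\mathcal{L}$ is satisfied by every node because $1$ is the only layer, and the condition $R_{VT}(u)\in\mathcal{T}$ is satisfied by every node because $R_{VT}$ is the constant map with value $\bot$. Hence $N^{IN}(v^1,L,T_V)=\{\,u^1\mid (u^1,v^1)\in E\,\}$ and $N^{OUT}(v^1,L,T_V)=\{\,u^1\mid (v^1,u^1)\in E\,\}$, and, dropping the now-redundant layer superscript, Definition~\ref{defNeighborhoodHMN} gives $N(v^1,L,T_V)=\{\,u\mid (u,v)\in E \text{ or } (v,u)\in E\,\}$.

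Finally I would invoke the identification $(V,E)=(V_{homo},E_{homo})$ from Lemma~\ref{homogeneoustoHMN} to conclude that this set is exactly $N(v)$ as in Definition~\ref{neighHomo}. The one point needing care is the directed-versus-undirected mismatch: Definition~\ref{defHMN} equips an HMN with directed edges, whereas Definition~\ref{neighHomo} reads $N(v)=\{v_j\mid (v,v_j)\in E\}$. I would settle this with the standard convention that an undirected edge of the homogeneous network is carried to the pair of opposite directed edges in the HMN (equivalently, $E_{homo}$ is treated symmetrically inside $E$); because the HMN neighbourhood is defined as the union $N^{IN}\cup N^{OUT}$, it is insensitive to this choice and recovers $\{v_j\mid (v,v_j)\in E_{homo}\}$ in either reading. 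I expect this small piece of bookkeeping about edge orientation, rather than any genuine mathematical difficulty, to be the only obstacle; everything else is a direct unwinding of the definitions.
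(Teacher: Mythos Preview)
Your proposal is correct and follows essentially the same approach as the paper: specialize Definition~\ref{defNeighborhoodHMN} with $\mathcal{L}=L$, $\mathcal{T}=T_V$, observe that with $|L|=1$ and $T_V=\{\bot\}$ the layer and type qualifiers in Equation~\eqref{eqn4} become vacuous, and conclude that the HMN neighbourhood collapses to Definition~\ref{neighHomo}. You are actually more careful than the paper, which gives a two-sentence sketch and does not mention the directed/undirected bookkeeping you handle explicitly via $N^{IN}\cup N^{OUT}$.
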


\begin{proof} 
Given HLN $G$ is nothing but a homogeneous network as per Lemma \ref{homogeneoustoHLN}. The Definition \ref{defNeighborhoodHLN} considers all types of neighbors of a node $v^l$ in all the layers when $|L| = 1, T_V,T_E = \{\bot\}$ which is nothing but the degree of the node $v^1$; making the neighborhood of HLN equivalent to the neighborhood of the homogeneous network (Definition \ref{neighHomo}, defined below) it represents. \end{proof}

\begin{definition}[Neighborhood] \label{neighHomo}
The neighbourhood of a node $v$ in a homogeneous network are the nodes that have an edge with $v$ i.e. the neighbourhood of a node $v$ is as defined as,
\begin{equation}
\label{eqnneighHomo}
\begin{gathered}
    N(v) = \{v_j |\text{ } (v,v_j) \in E\}
 \end{gathered}
 \end{equation}
\end{definition}

\begin{corollary}\label{degHomogeneous}
Given an HLN which is a homogeneous network (Lemma \ref{lemmahomogeneous}), $DC(v^l, \mathcal{L}, \mathcal{T})\\\text{ (Equation }\ref{eqn9}\text{) }\equiv DC(v)\ref{eqndegHomo}\text{) }$.
\begin{equation}
\label{eqndegHomo}
\begin{gathered}
   DC(v) = \frac{1}{n-1}|N(v)|
\end{gathered}
\end{equation}
\end{corollary}

\begin{proof} 
The neighborhood of an HLN with parameters according to Lemma \ref{lemmahomogeneous} is equivalent to the neighborhood of a homogeneous network. Thus, the numerator in Equation \ref{eqn9} is equivalent to the number of neighbors of a node (making numerator in Equation \ref{eqn9} = Equation \ref{eqndegHomo}). The denominator in Equation \ref{eqn9} contains all the nodes of the network (an HLN equivalent to a homogeneous network) except $v^l$ (the node whose centrality we are trying to find). So, the denominator in Equation \ref{eqn9} is equivalent to the denominator in Equation \ref{eqndegHomo}. Thus, it is proved that $DC(v^l, \mathcal{L}, \mathcal{T}) \equiv DC(v)$.
\end{proof}
\begin{corollary}\label{coroSPHomo}
The shortest path between two nodes of an HLN with parameters $|L| = 1, T_V, T_E = \{\bot\}$ is equivalent to the shortest path between the same nodes in a homogeneous network.
\end{corollary}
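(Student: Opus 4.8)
The plan is to reduce the statement to the identification already furnished by Lemma~\ref{homogeneoustoHMN} and then check that the extra restrictions built into the HMN shortest-path definition collapse under the stated parameters. First I would observe that since $|L| = 1$, the only admissible layer-filter is $\mathcal{L} = L = \{1\}$, and since $T_V = \{\bot\}$ the only admissible type-filter is $\mathcal{T} = T_V = \{\bot\}$. Consequently the set of nodes through which an HMN path is allowed to pass, namely $\{u^k \mid k \in \mathcal{L},\, R_{VT}(u) \in \mathcal{T}\}$, is all of $V$, so no vertex is excluded. Likewise, with a single layer every edge of $E$ is an intra-edge, and under the correspondence of Lemma~\ref{homogeneoustoHMN} the edge set of the HMN is exactly the edge set of the homogeneous network $G = (V,E)$ it represents.

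Next I would show the two notions of path length agree. In the unweighted case all edge weights equal $1$ in both models, so the weight of a walk in the HMN equals its length in $G$; in a weighted HMN the weights carry over unchanged under the same identification. Hence a walk $v = x_0, x_1, \dots, x_m = w$ is an admissible HMN path between $v^l$ and $w^k$ (respecting the now-vacuous layer/type filters) if and only if it is a path in $G$, and it has identical length in both. Minimising over these identical sets of admissible paths then yields $sp(v^l, w^k) \equiv sp(v, w)$ and $d(v^l, w^k) \equiv distance(v, w)$, which is precisely the asserted equivalence.

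The only point needing a little care — and the nearest thing to an obstacle — is edge orientation: HMN edges are directed by Definition~\ref{defHMN}, whereas Definition~\ref{defHomo} does not stress direction. I would dispose of this by noting that the homogeneous path/neighbourhood machinery invoked above (Definitions~\ref{neighHomo}--\ref{closeHomo}) is phrased through membership $(x,y) \in E$ in exactly the same manner as the HMN definitions, so the map of Lemma~\ref{homogeneoustoHMN} preserves the relevant adjacency relation verbatim and no orientation information is gained or lost. With that remark the corollary is immediate, and it serves as the base case from which the analogous HMN-versus-homogeneous equivalences for betweenness and closeness centrality follow.
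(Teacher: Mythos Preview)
Your proposal is correct and follows essentially the same approach as the paper: you invoke Lemma~\ref{homogeneoustoHMN} and observe that with $|L|=1$ and $T_V=T_E=\{\bot\}$ the layer and type filters in the HMN shortest-path definition become vacuous, so the admissible paths coincide with those in the underlying homogeneous network. The paper's proof is a one-sentence version of this; your write-up is more careful (spelling out that $\mathcal{L}$ and $\mathcal{T}$ are forced, handling weights explicitly, and flagging the directedness issue), but the underlying argument is the same.
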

\begin{proof}
When we have only a single layer, i.e., $|L| = 1$ and a single type of vertex and edge , i.e., $T_V, T_E = \{\bot\}$ then we consider nodes belonging to all the layers and node types in the shortest path by default (as an HLN is a homogeneous network with the given parameters as per Lemma \ref{homogeneoustoHLN}) making the shortest path in an HLN equivalent to the shortest path in a homogeneous network.
\end{proof}
\begin{corollary}\label{betweenHomogeneous}
Given an HLN which is a homogeneous network (Lemma \ref{lemmahomogeneous}), $BC(v^l, \mathcal{L}, \mathcal{T})\\\text{ (Equation }\ref{eqn10}\text{) }\ref{eqnbetHomo}\text{) }$.
\begin{equation}
\label{eqnbetHomo}
\begin{gathered}
BC(v) = \sum_{x,y \in V\setminus \{v\}}{\frac{|sp(x,y|v)|}{|sp(x,y)|}}
\end{gathered}
\end{equation}
The function $sp(x,y)$ denotes the set of all shortest paths between two nodes $x$ and $y$ in a network and $sp(x,y|v)$ returns the shortest paths from $x$ to $y$ that passes through $v$.
\end{corollary}
\begin{proof} 
The shortest path between two nodes of an HLN with parameters as in Corollary \ref{coroSPHomo} is equivalent to the shortest path between the same nodes of a homogeneous network. Thus the numerator and denominator in Equation \ref{eqn10} is equivalent to the numerator and denominator in Equation \ref{eqnbetHomo}. 
\end{proof}

\begin{corollary}
Given an HLN which is a homogeneous network (Lemma \ref{lemmahomogeneous}), $CC(v^l, \mathcal{L}, \mathcal{T})$ \\\text{ (Equation }\ref{eqn12}\text{) } $\equiv CC(v)\ref{eqncloseHomo}\text{) }$.
\begin{equation}
\label{eqncloseHomo}
    \begin{aligned}
        \begin{gathered}
            CC(v) = \sum_{w \in V \setminus \{v\}} \frac{1}{distance(v,w)}
        \end{gathered}
    \end{aligned}
\end{equation}
Here (\textit{distance}($v$,$w$)) denotes the sum of the weights on the edges of the shortest path between two nodes $v$ and $w$ in a network.
\end{corollary}
\begin{proof}
The Distance between two nodes of an HLN with parameters as in Corollary \ref{coroSPHomo} is equivalent to the distance between the same nodes of a homogeneous network. Thus the numerator and denominator in Equation \ref{eqn12} is equivalent to the numerator and denominator in Equation \ref{eqncloseHomo}. 
\end{proof}
\textcolor{black}{
\begin{definition}[HLN Clustering Co-efficient]
Given $\mathcal{L}$ and $\mathcal{T}$, the clustering coefficient (CCo) of a node, $v^l$, in a hybrid layered network is defined as the fraction of triangles that the node $v^l$ participates in, out of the total number of triangles possible through that node. That is,
\begin{equation}
\label{eqnClusterC}
\footnotesize
\begin{gathered}
CCo (v^l, \mathcal{L}, \mathcal{T}) = \frac{2*|Triangles(x^k , y^j, v^l)|}{|N(v^l, \mathcal{L}, \mathcal{T})| * (|N(v^l, \mathcal{L}, \mathcal{T})| - 1)} \\
\text{ where } k, j \in \mathcal{L}, R_{VT}(x^k) \in \mathcal{T}, R_{VT}(y^j) \in \mathcal{T}
\end{gathered}
\end{equation}
We can prove all the Lemmas and Corollary for clustering co-efficient in a similar manner as shown in the previous definitions.
\end{definition}}

\section{Experiment and Results}
\label{results}
Experiments have been performed to show the ability of the proposed algorithm to generate different networks with structural properties close to real-world networks by changing the algorithm parameters. We use Twitter and European air transportation network as representative of HLN and multilayer network respectively. Additionally, we use chemical, biological and various other network as representative of homogeneous network. 

\subsection{\textcolor{black}{Real world Hybrid Layered Network Generation}}
\textcolor{black}{We have used a Twitter dataset \cite{guptashubham} (referred to here as \textit{TWITT}) and represented it in the format as shown in Figure \ref{twitter}. The rationale for the use of TWITT is that the network is better modeled (refer Section \ref{sec:adv:hln}) as a HLN with heterogenity in its layers. A tweet can be easily classified as aggressive or non-aggressive based on a standard language classification model. The Twitter network has $20125$ nodes and $3938046$ edges in the tweet layer (layer $1$), $35936$ nodes and $60824$ edges in the user layer (layer $2$) and $93123$ edges in the user-tweet layer (interlayer connection).}
\subsubsection{\textcolor{black}{Comparative Results}}
\begin{figure*}
  \centering
    \begin{subfigure}[b]{0.31\textwidth}
          \centering
          \includegraphics[width=\textwidth]{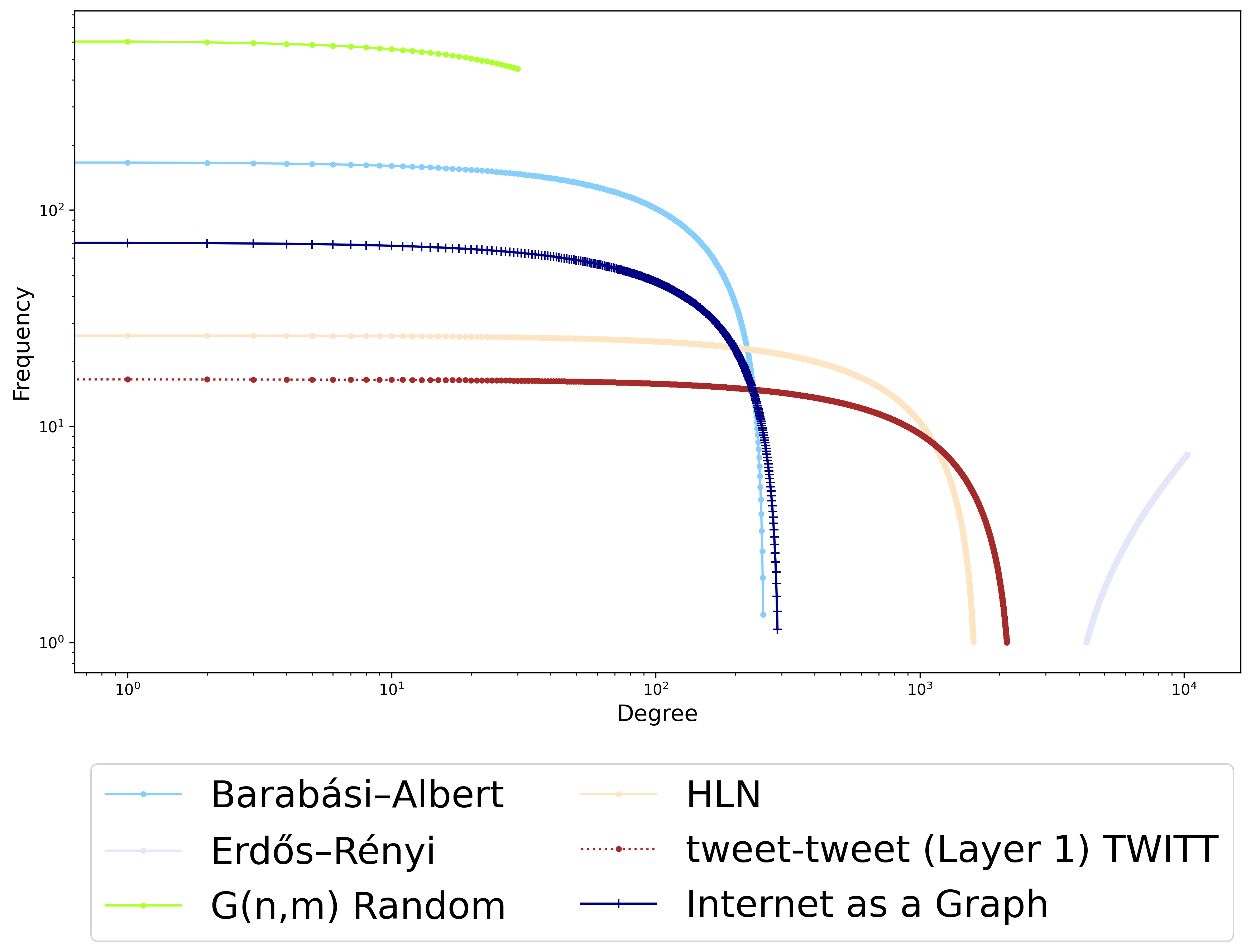}
          \caption{tweet $\xleftrightarrow{\text{tag}}$ tweet layer }
          \label{layer1_comparison}
    \end{subfigure}
    \begin{subfigure}[b]{0.31\textwidth}
          \centering
          \includegraphics[width=\textwidth]{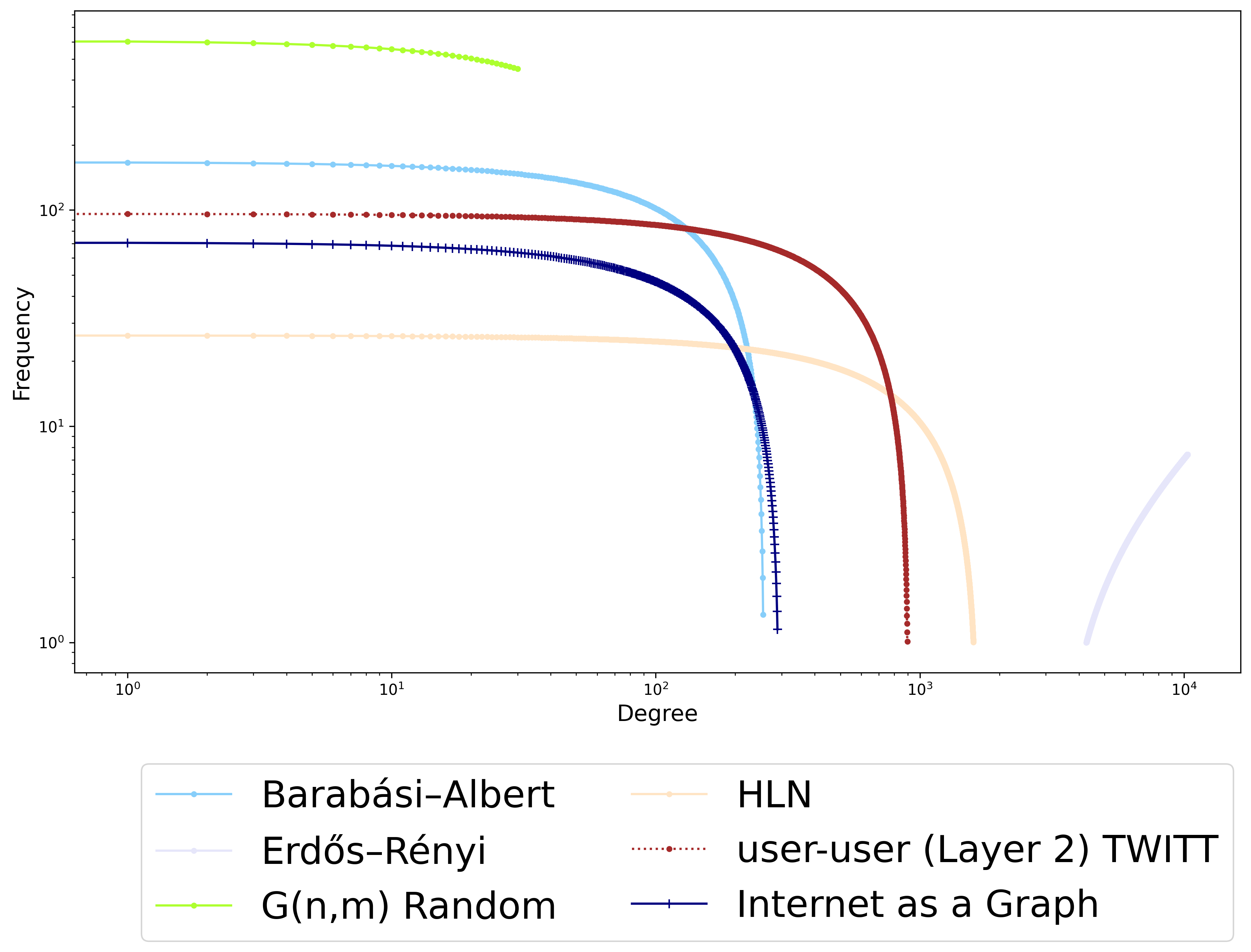}
          \caption{user $\xleftrightarrow{\text{follows}}$ user layer. }
          \label{layer2_comparison}
     \end{subfigure}
     \begin{subfigure}[b]{0.31\textwidth}
          \centering
          \includegraphics[width=\textwidth]{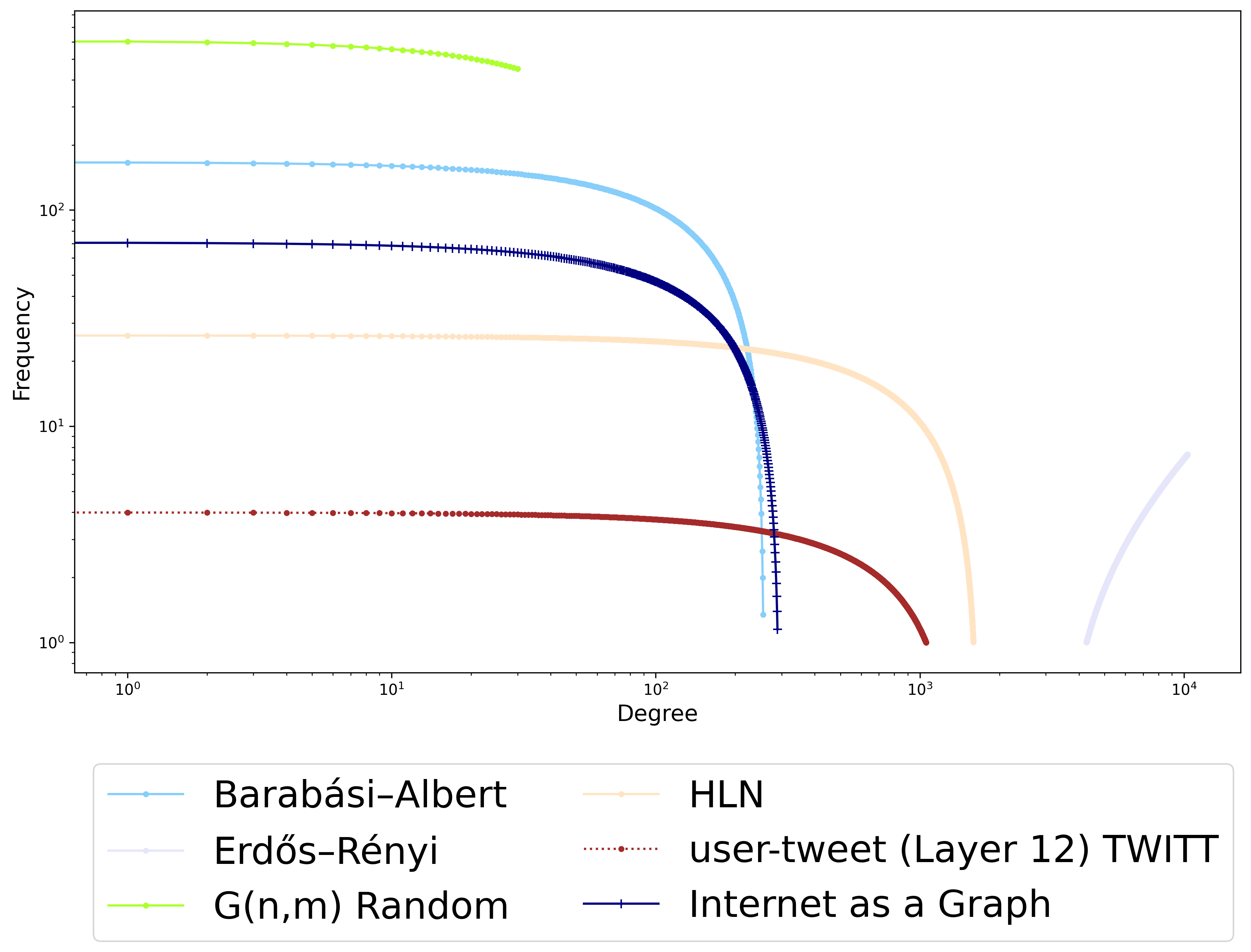}
          \caption{user $\xleftrightarrow{\text{posts}}$ tweet layer. }
          \label{layer12_comparison}
     \end{subfigure}
     \caption{Comparing the {smoothed (using regression)} degree distribution of the different layers of the TWITT network with our synthetic network and other standard networks on a logarithmic scale.}
     \label{TWITT_comparison}
 \end{figure*}
\textcolor{black}{In our literature survey we have not found existing methods for generating a network with diverse type of nodes and multiple layers. Hence, we compare the structural properties of the generated HLN by Algorithms \ref{algo1} and \ref{algo2} with the existing homogeneous models. The homogeneous models used are the {Barabási–Albert} (BA) model \cite{doi:10.1126/science.286.5439.509}, {Erdos–Rènyi} (ER) model \cite{Erdos:1959:pmd}, {Internet as a graph} \cite{5586438}, and {G(n,m)} Random graph. In each of the Figures \ref{layer1_comparison}, \ref{layer2_comparison}, \ref{layer12_comparison}, we have compared the degree distribution of the largest component of different layers of the  TWITT network with graphs generated from the aforementioned models as well as proposed synthetic network.} We have generated each of these networks with number of nodes ranging from $10000$ all the way to $40000$ and reported the degree distribution which is most comparable to the TWITT network for each comparing methods. For the ER model we use $p$ value of $0.1-0.9$ and report the best results. For the BA model we vary $m$ from $2-5$ and report the best results (we donot use a higher $m$ value as it does not produce comparable results). 
\textcolor{black}{As we can see from the Figures \ref{layer1_comparison}, \ref{layer2_comparison} and \ref{layer12_comparison} generated HLN is very close in replicating the degree distribution of the actual TWITT dataset in all the layers when compared to other modelling algorithms. It must be noted that we generated an HLN with two layers for representing the real Twitter network with $L = \{1, 2\}, R_L(1) = R_L(2) = \{t_1, t_2\}$. {For modelling the tweet-tweet and user-user layer with our HLN we use $20000$ nodes with $\alpha = \beta = 0.5$ and $m=3$, we use $m =4$ for the user tweet layer. The plots we use in Figure \ref{TWITT_comparison} are regression plots with log scale to clearly distinguish between the degree distribution of each model. } It must be noted our proposed network does not have any node having a degree less than the thresholds defined in $m$, similar to a BA network. When we compare our results with other models we see that our model is consistent  across the layers which shed light on the generic nature of our model. We have considered the interlayer as well as the intralayer degree of a node for preferential attachment, and from the Figures \ref{layer1_comparison} - \ref{layer12_comparison} it is evident that it well describes a real-world network. }
\begin{table*}[t]
\sffamily
\centering
\caption{Centrality measure and clustering co-efficient of \textbf{EATN} when compared to our generated HLN}
\resizebox{\textwidth}{!}{
\begin{tabular}{|l|l|l|l|l|l|l|l|}
\hline
\rule{0pt}{2ex}
\textbf{Dataset} & \textbf{Nodes} & \textbf{Edges} & \textbf{Degree}  & \textbf{Betweenness} & \textbf{Avg CC} & \textbf{Avg Triangles/Node} & \textbf{Triangles}  \\\hline
\rule{0pt}{2ex}
\textbf{EATN} & 55    & 97    & 0.06027 & 0.02592    & 0.45824    & 2.22683        & 62.62162   \\ \hline
\rowcolor{gray!15}
\rule{0pt}{2ex}
\textit{\textbf{\textcolor{black}{Generated HLN ($\alpha$ = 1, $\beta$ = 0, $m$ = 2) }}} & 67    & 208   & 0.06147 & 0.02093    & 0.43264    & 3.59262        & 91.1111    \\ \hline
\end{tabular}
}
\label{tab:comparison_binball_HLNg}
\end{table*}
\begin{table*}[t]
    \sffamily
    \centering
    \caption{Comparison of Generated network with \textbf{Chemical} networks}
    \resizebox{\textwidth}{!}{
    \begin{tabular}{|l|l|l|l|l|l|l|l|l|l|}
    \hline
    \rule{0pt}{2ex}
         \textbf{Datasets} & \textbf{Nodes} & \textbf{Edges} & \textbf{Density} & \textbf{Avg Degree} & \textbf{Assortativity} & \textbf{Triangles} & \textbf{Avg Triangles/Node} & \textbf{Avg CC} & \textbf{Clique Number} \\ \hline  
        \rule{0pt}{2ex}
         ENZYMES-g272 &  44 &  156 &  0.165 &  3.55 &  -0.05 &  47  &  1.07 &  0.23 &  2 \\ 
        ENZYMES-g366 & 42 & 152 & 0.176 & 3.62 & -0.03 & 48 & 1.14 & 0.23 & 1 \\ 
        \rowcolor{gray!15}\textbf{\textit{Generated HLN ($\alpha$ = 0.6, $\beta$ = 0.6, $m$ = 2)}} & 54 & 155 & 0.108 & 4.74 & -0.06 & 48 & 1.67 & 0.14 & 4 \\ \hline
        \rule{0pt}{2ex}
        ENZYMES-g392 & 48 & 178 & 0.158 & 3.71 & -0.02 & 74 & 1.54 & 0.26 & 1 \\ 
        ENZYMES-g117 & 46 & 180 & 0.174 & 3.91 & -0.03 & 59 & 1.28 & 0.19 & 2 \\ 
        \rowcolor{gray!15}\textbf{\textit{Generated HLN ($\alpha$ = 0.7, $\beta$ = 0.7, $m$ = 2)}} & 60 & 173 & 0.098 & 4.77 & -0.05 & 65 & 2.25 & 0.17 & 4 \\ \hline
        \rule{0pt}{2ex}
        ENZYMES-g526 & 58 & 220 & 0.133 & 3.79 & -0.02 & 66 & 1.14 & 0.21 & 1 \\ 
        ENZYMES-g527 & 57 & 214 & 0.134 & 3.75 & -0.03 & 80 & 1.4 & 0.27 & 2 \\ 
        \rowcolor{gray!15}\textbf{\textit{Generated HLN ($\alpha$ = 0.8, $\beta$ = 0.8, $m$ = 2)}} & 74 & 215 & 0.080 & 4.81 & -0.028 & 70 & 2.84 & 0.16 & 4 \\ \hline
        \rule{0pt}{2ex}
        ENZYMES-g349 & 64 & 236 & 0.117 & 3.69 & 0.00 & 78 & 1.22 & 0.24 & 2 \\ 
        ENZYMES-g103 & 59 & 230 & 0.134 & 3.9 & -0.03 & 73 & 1.24 & 0.22 & 2 \\ 
        \rowcolor{gray!15}\textbf{\textit{Generated HLN ($\alpha$ = 0.9, $\beta$ = 0.9, $m$ = 2)}} & 80 & 233 & 0.074 & 4.82 & -0.109 & 82 & 3.07 & 0.17 & 4 \\ \hline
        \rule{0pt}{2ex}
        ENZYMES-g295 & 123 & 278 & 0.037 & 2.26 & 0.00 & 6 & 0.05 & 0.01 & 1 \\ 
        ENZYMES-g296 & 125 & 282 & 0.036 & 2.26 & 0.00 & 2 & 0.02 & 0.01 & 1 \\ 
        \rowcolor{gray!15}\textbf{\textit{Generated HLN ($\alpha$ = 1.0, $\beta$ = 1.0, $m$ = 2)}} & 124 & 258 & 0.034 & 4.16 & -0.436 & 0 & 0.00 & 0.00 & 2 \\ \hline
    \end{tabular}
    }
    \label{mrTable1}
\end{table*}
\begin{table*}[]
    \sffamily
    \centering
    \caption{Comparison of Generated network with \textbf{Biological} networks}
    \resizebox{\textwidth}{!}{
    \begin{tabular}{|l|l|l|l|l|l|l|l|l|l|}
    \hline
    \rule{0pt}{2ex}
          \textbf{Datasets} & \textbf{Nodes} & \textbf{Edges} & \textbf{Density} & \textbf{Avg Degree} & \textbf{Assortativity} & \textbf{Triangles} & \textbf{Avg Triangles/Node} & \textbf{Avg CC} & \textbf{Clique N} \\ \hline
          \rule{0pt}{2ex}
        Bio-yeast-protein-inter & 1846 & 4406 & 0.003 & 4.000 & -0.160 & 72 & 0.110 & 0.050 & 6 \\ 
        \rowcolor{gray!15}\textbf{Generated HLN ($\alpha$ = 0.7, $\beta$ = 0.7, $m$ = 2)} & 1978 & 5927 & 0.003 & 4.993 & -0.800 & 60 & 0.091 & 0.024 & 4 \\\hline
        \rule{0pt}{2ex}
        bio-DM-HT & 2989 & 4660 & 0.001 & 3.118 & -0.090 & 59 & 0.059 & 0.010 & 3 \\ 
        \rowcolor{gray!15}\textbf{Generated HLN ($\alpha$ = 0.8, $\beta$ = 0.5, $m$ = 2)} & 1508 & 4517 & 0.004 & 4.991 & -0.120 & 68 & 0.135 & 0.019 & 4 \\ \hline
        \rule{0pt}{2ex}
        bio-grid-mouse & 1450 & 3272 & 0.003 & 4.000 & -0.150 & 120 & 0.248 & 0.030 & 7 \\ 
        \rowcolor{gray!15}\textbf{Generated HLN ($\alpha$ = 0.9, $\beta$ = 0.7, $m$ = 3)} & 1488 & 5939 & 0.005 & 6.983 & -0.190 & 110 & 0.222 & 0.028 & 5 \\ \hline
    \end{tabular}
    }
    \label{mrTable2}
\end{table*}
\begin{table*}[]
\sffamily
\centering
\caption{Comparison of Generated network with \textbf{Miscellaneous} networks}
\label{mrTable3}
\resizebox{\textwidth}{!}{%
\begin{tabular}{|l|l|l|l|l|l|l|l|l|l|}
\hline
\rule{0pt}{2ex}
          \textbf{Datasets} & \textbf{Nodes} & \textbf{Edges} & \textbf{Density} & \textbf{Avg Degree} & \textbf{Assortativity} & \textbf{Triangles} & \textbf{Avg Triangles/Node} & \textbf{Avg CC} & \textbf{Clique N} \\ \hline 
        \rule{0pt}{2ex}
        cryg2500 & 2500 & 9849 & 0.003 & 7.000 & 0.600 & 400 & 0.400 & 0.010 & 4 \\
        Watt-2 & 1856 & 9694 & 0.006 & 10.000 & -0.070 & 838 & 0.930 & 0.020 & 4 \\ 
        \rowcolor{gray!15}\textbf{\textit{Generated HLN ($\alpha$ = 1.0, $\beta$ = 0.9, $m$ = 3)}} & 2441 & 9751 & 0.003 & 7.989 & 0.020 & 708 & 0.870 & 0.015 & 4 \\\hline
        \rule{0pt}{2ex}
        PTC-MR & 4915 & 10108 & 0.001 & 4.000 & -0.300 & 180 & 0.109 & 0.000 & 3 \\
        PTC-FM & 4925 & 10110 & 0.001 & 4.000 & -0.300 & 168 & 0.102 & 0.000 & 3 \\ 
        \rowcolor{gray!15}\textbf{\textit{Generated HLN ($\alpha$ = 1.0, $\beta$ = 0.9, $m$ = 3)}} & 4059 & 10236 & 0.001 & 5.044 & -0.252 & 203 & 0.150 & 0.000 & 2 \\\hline
        \rule{0pt}{2ex}
        M80PI-n1 & 4028 & 8066 & 0.001 & 4.000 & -0.140 & 0 & 0.000 & 0.000 & 2 \\
        S80PI-n1 & 4028 & 8066 & 0.001 & 4.000 & -0.140 & 0 & 0.000 & 0.000 & 2 \\ 
        \rowcolor{gray!15}\textbf{\textit{Generated HLN ($\alpha$ = 0.5, $\beta$ = 0.5, $m$ = 2)}} & 3909 & 7662 & 0.001 & 3.920 & -0.168 & 0 & 0.000 & 0.000 & 2 \\\hline
        \rule{0pt}{2ex}
        Mutag & 3371 & 7442 & 0.001 & 4.000 & -0.260 & 0 & 0.000 & 0.000 & 2 \\ 
        \rowcolor{gray!15}\textbf{\textit{Generated HLN ($\alpha$ = 0.5, $\beta$ = 0.5, $m$ = 2)}} & 3822 & 7338 & 0.001 & 3.840 & -0.154 & 0 & 0.000 & 0.000 & 2 \\\hline
        \rule{0pt}{2ex}
        bn-mouse-kasthuri-graph-v4 & 1029 & 1700 & 0.003 & 3.000 & -0.220 & 0 & 0.000 & 0.000 & 7 \\
        \rowcolor{gray!15}\textbf{\textit{Generated HLN ($\alpha$ = 0.8, $\beta$ = 0.5, $m$ = 2)}} & 807 & 1578 & 0.005 & 3.911 & -0.259 & 0 & 0.000 & 0.000 & 2 \\ \hline
        \rule{0pt}{2ex}
        bibd-15-3 & 455 & 1364 & 0.013 & 5.000 & -0.630 & 166 & 1.094 & 0.010 & 4 \\ 
        \rowcolor{gray!15}\textbf{\textit{Generated HLN ($\alpha$ = 1.0, $\beta$ = 0.0, $m$ = 3)}} & 481 & 1436 & 0.012 & 5.971 & -0.632 & 180 & 1.123 & 0.047 & 4 \\ \hline
        \rule{0pt}{2ex}
        lpi-bgdbg1 & 629 & 1579 & 0.008 & 5.000 & -0.040 & 159 & 0.658 & 0.010 & 3 \\
        \rowcolor{gray!15}\textbf{\textit{Generated HLN ($\alpha$ = 1.0, $\beta$ = 0.0, $m$ = 3)}} & 519 & 1550 & 0.012 & 5.973 & -0.015 & 143 & 0.827 & 0.027 & 4 \\ \hline
        \rule{0pt}{2ex}
        Ia-crime-moreno & 829 & 1474 & 0.004 & 3.000 & -0.160 & 57 & 0.106 & 0.010 & 3 \\ 
        \rowcolor{gray!15}\textbf{\textit{Generated HLN ($\alpha$ = 1.0, $\beta$ = 0.0, $m$ = 2)}} & 763 & 1443 & 0.005 & 3.782 & -0.230 & 0 & 0.000 & 0.010 & 2 \\ \hline
\end{tabular}%
}
\end{table*}
\subsection{Multilayer Network Generation}
To demonstrate the capabilities of HMN for generating multilayer networks we have tried to model the European air transportation network \cite{Cardillo2013} abbreviated as
EATN. The air transportation network is also a multiplex network having 37 different layers with each layer representing different airlines of Europe. 

\subsubsection{Comparative Results}
We have generated 10 layers of the air transportation network using our HLN generating algorithm (with $L = \{1, 2, ..., 37\}, R_L(i) = \{t_1\}$) and compared the average centrality measures of the air transportation network
with our generated networks. The results are shown in Table \ref{tab:comparison_binball_HLNg}. It must be noted that the results presented in the table are averaged over the nodes of each of the layers in the multiplex network. The column Triangles/node denotes the average number of triangles any node participates in averaged over all the layers.
\subsection{Homogeneous Network Generation}
We have successfully shown the generation capabilities of proposed {Algorithm \ref{algo1}} for generating real-world HLN and multilayer networks. The algorithm, however, can also be used to generate homogeneous networks (with $L = \{1\}$ and $R_L(1) = \{t_1\}$). In order to show the generalization capability of our proposed algorithm, we have tried to generate networks belonging to different domains like small molecule datasets (PTC-MR, PTC-FM), biological networks (bio-DM-HT, bio-grid-mouse, Bio-yeast-protein-inter), networks of nitroaromatic compounds (Mutag), crystal growth eigenmode graphs (cryg2500),  combinatorial problems (bibd-15-3), computational fluid dynamics graph (Watt-2), linear programming problems (lpi-bgdbg1), eigenvalue model reduction problems (M80PI-n1, S80PI-n1), chemical datasets (ENZYMES-g272, g366, g392, g117, g526, g527, g349, g103, g295, g296), brain networks (bn-mouse-kasthuri-graph-v4) and even crime dataset (Ia-crime-moreno). We have collected the networks from \cite{nr}. Networks belonging to different domains have different structural properties such as degree, density, centrality measure, number of triangles and assortativity, etc. We have compared our generated network with the existing networks on such structural parameters. The results are shown in Table \ref{mrTable1}, \ref{mrTable2}, \ref{mrTable3}. We have considered other structural measures apart from centrality measures. This includes density, average degree, assortativity, triangle counts, clustering co-efficient, and the number of max cliques. It must be noted that in all the tables, CC denotes clustering coefficient.

{Table \ref{mrTable1} models Chemical Networks using Algorithm 1 with varying parameters. Except ENZYMES-295 and g296 all other networks are single layer and hence the value of $L$ for them are kept $1$. For ENZYMES-g272 and g366, we use $\alpha = \beta = 0.6$, and $m = 2$. ENZYMES-g392 and g117 use $\alpha = \beta = 0.7$, while ENZYMES-526 and g527 use $\alpha = \beta = 0.8$, both with $m = 2$. For ENZYMES-349 and g103, $\alpha = \beta = 0.9$ is used. All single-layer networks have arbitrary $R_L$ values. ENZYMES-295 and g296 are modeled with $\alpha = \beta = 1.0$, $m = 2$, and $L = 2$, incorporating inter-layer connections, which inherently have zero triangles. In this case $R_L$ is a function that assigns nodes to a layer uniformly randomly.}

{In Table \ref{mrTable2}, we model biological networks with varying parameters. For the Bio-yeast-protein-inter network, we use $\alpha = \beta = 0.7$ and $m = 2$. The bio-grid-mouse network is modeled with $\alpha = 0.9, \beta = 0.7$ and $m = 3$. For the bio-DM-HT network, we use $\alpha = 0.8$, $\beta = 0.5$, and $m = 2$. All networks have $L = 1$ with arbitrary $R_L$. Increasing $m$ by 1 increases edge count while maintaining other network properties.}

{Table \ref{mrTable3} compares various network gneerated by our methods with the network from other domains. Computational graphs (cryg2500, Watt-2) and molecule datasets (PTC-MR, PTC-FM) are generated with $\alpha = 1.0$, $\beta = 0.9$, $m = 3$, and $L = 1$. Graphs used in eigenvalue model reduction problems (M80PI-n1, S80PI-n1) and nitroaromatic compound (Mutag) problem is obtained by using $\alpha = \beta = 0.5$, $m = 2$, and $L = 2$, incorporating inter layer connections. Brain networks (bn-mouse-kasthuri-graph-v4) are modeled with the same parameters, totaling 807 nodes across both layers. Combinatorial problem networks (bibd-15-3) and linear programming networks (lpi-bgdbg1) use $\alpha = 1.0$, $\beta = 0.0$, $m = 3$, and $L = 1$. Crime dataset networks (la-crime-moreno) are represented with $\alpha = 1.0$, $\beta = 0.0$, $m = 2$, and $L = 1$. The total node counts for all networks are reported in Table \ref{mrTable3}. In all these cases where the number of layers $L$ = $2$, $R_L$ assigns nodes to a layer uniformly randomly.}

\textit{One may \textbf{note}} that the network generation algorithm proposed in this paper is designed to have unknown node-correspondence (UNC) method \cite{Tantardini2019} of generation that does not use the number of nodes and edges as identity of a network. This is the reason we did not use the same number of nodes for most of the datasets. The reason behind this is that we wanted to focus more on generating a network with same structural properties with a comparable number of nodes. It must also be noted that there is a randomness in the selection of neighbours of a node (i.e. we use a uniform distribution for selecting candidates with the same degree) which results in non-determinism, i.e. network properties may vary slightly even with the same parameter values.


\section{Discussion and Conclusion}
\label{conclude}
In this paper, we introduced a new model Hybrid Layered Network (HLN), which is a generalized model of network capable of representing any complex networks of type homogeneous, heterogeneous, multilayer and their combinations. We also defined different structural measures on HLN. We have proved that the set of all HLNs is a superset of the set of all homogeneous, heterogeneous, and multi-layered networks.
In addition, a parameterized algorithm is presented to generate an HLN synthetically. We show that the algorithm is able to generate a homogeneous, heterogeneous, and multilayered network by changing parameter values. 

{\textbf{Limitations:} In this work we only provided the generalized definitions of certain structural measures like degree centrality, betweeness centrality and closeness centrality. However, in network science there are many different structural properties defined for homogeneous, heterogeneous and multilayer graphs. This would be a good study to develop corresponding definitions for HLN as well. Some of these measure can be clustering co-efficient, triangles and cliques for HLN. The proposed HLN generation algorithm may not generate certain networks as it has limited number of parameters. Hence, it could only able to generate networks that is UNC with comparable node and edge counts. A better algorithm leveraging the generative capabilities of new edge GNN can be used in the future to generate synthetic graphs which can simultaneously match the numbers of nodes, edges and other structural properties of the network. Furthermore, the intra layer networks generated by our algorithm follow a scale free property. An extension can be made to generate specific degree distribution in the future. Finally, it is of interest to find the theoretical bounds for the networks generated by our algorithm.}

{Despite the aforementioned limitations, the network generated by the algorithm is generalized and can be tweaked by changing the parameter values for applications in certain areas where networks largely follow a scale-free property. These synthetic networks will open the opportunity to research with HLNs that is otherwise difficult to conduct due to the unavailability of the data set. With the availability of the services (e.g., Fediverse) on ActivityPub protocol, we expect the real-world HLN data will be available in near future. Although heterogeneous network data sets are available for research, to the best of our knowledge, there is no algorithm for generating a heterogeneous network and the proposed algorithm would encourage research with heterogeneous networks as well. Note that the proposed algorithm can only be used to generate an undirected HLN. However, we believe that with minor changes, we can generate directed HLNs as well. An important future research is to show that our proposed definition of HLN holds for a dynamic network or a signed network.}

Finally, with this work, we tried to open a new avenue of research with complex networks. While the theories developed will help further theoretical analysis and provide the basis of application, the synthetic network generation algorithm will provide the opportunity to develop applications with HLN.

\bibliography{archivx.bib}

\end{document}